\newcommand{\congest}{${\mathsf{CONGEST}}$}
\newcommand{\paren}[1]{\left (#1\right)}
\newcommand{\curparen}[1]{\left \{#1\right\}}
\newcommand{\curparenn}[1]{\{#1\}}
\newcommand{\dilation}{\mbox{\tt d}}
\newcommand{\congestion}{\mbox{\tt c}}
\newcommand{\Diam}{\mathsf{Diam}}
\newcommand{\PSMEnc}{\mathsf{PSM.Enc}}
\newcommand{\PSMDec}{\mathsf{PSM.Dec}}
\newcommand{\View}{\mathsf{View}}
\newcommand{\Sim}{\mathsf{Sim}}
\newcommand{\PSM}{\mathsf{PSM}}
\newcommand{\polylog}{\mathsf{polylog}}
\def\cA{{\cal A}}
\def\cC{{\cal C}}
\def\cN{{\cal N}}
\renewcommand{\paragraph}[1]{\vspace{0.15cm}\noindent {\bf #1}}
\titlespacing*{\section}{0pt}{1.1\baselineskip}{\baselineskip}
\titlespacing*{\subsection}{-5pt}{\baselineskip}{\baselineskip}
\newtheorem{theorem}{Theorem}
\newtheorem{definition}{Definition}
\newtheorem{lemma}{Lemma}
\newtheorem{claim}{Claim}
\newtheorem{remark}{Remark}
\newtheorem{corollary}{Corollary}
\crefname{claim}{Claim}{Claims}
\crefname{observation}{Observation}{Observations}
\newcommand{\A}{{\mathcal A}}
\newcommand{\bit}{\{0,1\}}
\newcommand{\ie}  {i.e.,\ }
\newcommand{\eg}  {e.g.,\ }
\newcommand{\etalcite}[1]{{et~al.~\cite{#1}}}
\newcommand{\poly}{\mathsf{poly}}
\newcommand{\ith}[1]{{#1}\textsuperscript{th}}
\newcommand{\ignore}[1]{}
\title{Distributed Algorithms Made Secure:\\A Graph Theoretic Approach}
\author{Merav Parter \thanks{Department of Computer Science and
		Applied Mathematics, Weizmann Institute of Science, Israel.}
	\and Eylon Yogev\footnotemark[2]
}
\date{}
\begin{document}
\maketitle
\begin{abstract}
In the area of distributed graph algorithms a number of network's entities with 
local views solve some computational task by exchanging messages with their 
neighbors. Quite unfortunately, an inherent property of most existing 
distributed algorithms is that throughout the course of their execution, the 
nodes get to learn not only their own output but rather learn quite a lot on 
the inputs or outputs of many other entities. This leakage of information might 
be a major 
obstacle in settings where the output (or input) of network's individual is a 
private information (\eg distributed networks of selfish agents, 
decentralized digital currency such as Bitcoin).

While being quite an unfamiliar notion in the classical distributed setting,  
the notion of secure multi-party computation (MPC) is one of the main 
themes in the Cryptographic community. The existing secure MPC protocols do not 
quite fit the framework of classical distributed models in which only messages 
of bounded size are sent on graph edges in each round.
In this paper, we introduce a new framework for \emph{secure distributed graph 
algorithms} and provide the first \emph{general compiler} that takes any 
``natural'' non-secure distributed algorithm that runs in $r$ rounds, and turns 
it into a secure algorithm that runs in $\widetilde{O}(r \cdot D \cdot 
\poly(\Delta))$ rounds where $\Delta$ 
is the maximum degree in the graph and $D$ is its diameter. 
A ``natural'' 
distributed algorithm is one where the local computation at each node can be performed in 
polynomial time. An interesting advantage of our approach is that it allows one to decouple between the price of locality and the price of \emph{security} of a given graph function $f$. The security of the compiled algorithm is 
information-theoretic but holds only against a semi-honest adversary that 
controls a single node in the network. 

This compiler is made possible due to a new combinatorial structure called 
\emph{private neighborhood trees}: a collection of $n$ trees $T(u_1),\ldots, 
T(u_n)$, one for each vertex $u_i \in V(G)$, such that each tree $T(u_i)$ spans 
the neighbors of $u_i$ {\em without going through $u_i$}. Intuitively, each 
tree $T(u_i)$ allows all neighbors of $u_i$ to 
exchange a \emph{secret} that is hidden from $u_i$, which is the basic 
graph infrastructure of the compiler. In a 
$(\dilation,\congestion)$-private neighborhood trees each tree $T(u_i)$ has 
depth at most $\dilation$ and each edge $e \in G$ appears in at most 
$\congestion$ different trees. We show a 
construction of private neighborhood trees with 
$\dilation=\widetilde{O}(\Delta \cdot D)$ and $\congestion=\widetilde{O}(D)$, 
both these bounds are \emph{existentially} optimal.
\end{abstract}

\thispagestyle{empty}
\newpage
\tableofcontents
\thispagestyle{empty}
\newpage

\setcounter{page}{1}
\section{Introduction}
In \emph{distributed graph algorithms} (or network algorithms) a number of 
individual entities are connected via a potentially large network.  
Starting with the breakthrough by Awerbuch \etalcite{AwerbuchGLP89}, 
and the 
seminal work of Linial \cite{linial1992locality}, Peleg \cite{Peleg:2000} and 
Naor and Stockmeyer \cite{naor1995can}, the area 
of distributed graph algorithms is growing rapidly. Recently, it has been receiving considerably more 
theoretical and practical attention motivated by the spread of 
multi-core computers, cloud computing, and distributed 
databases. We consider the standard synchronous message passing model (the \congest\ 
model) where in 
each round $O(\log n)$ bits can be transmitted over every edge where $n$ is the 
number of entities.

The common principle underlying all distributed graph algorithms (regardless of 
the model specification) is that the input of the algorithm is given in a
\emph{distributed format}. Consequently the goal of each vertex is to 
compute its \emph{own} part of the output, \eg whether it is a member of a 
computed 
maximal independent set, its own color in a valid coloring of the graph, its 
incident edges in the minimum spanning tree, or its 
chosen edge for a maximal matching solution.
In most distributed algorithms, throughout execution,  
vertices learn much more than merely their own output but 
rather collect additional information on the input or output of (potentially) 
many other vertices in the network. This seems inherent in many distributed 
algorithms, as the output of one 
node is used in the computation of another.
For instance, most randomized coloring (or 
MIS) algorithms 
\cite{luby1986simple,barenboim2013distributed,
barenboim2016locality,harris2016distributed,
ghaffari2016improved,chung2017distributed} are based on the vertices exchanging 
their current color with their 
neighbors in order to decide whether they are legally colored. 

In cases where the data is sensitive or private, these algorithms may raise 
security concerns. To exemplify this point, consider the task of computing the 
average salary in a distributed network. This is a rather
simple distributed task: construct a BFS tree and let the nodes send their 
salary from the leaves to the root where each intermediate node 
sends to its parent in the tree, the sum of all salaries received from its 
children. While the output goal has been 
achieved, privacy has been compromised as intermediate nodes learn more 
information regarding the salaries of their subtrees. 
Additional motivation for secure distributed computation includes private 
medical data, networks of selfish agents with private utility functions, and 
decentralized digital currencies such as the Bitcoin.

The community of distributed graph algorithms is commonly concerned with two 
primary challenges, namely, locality (i.e., communication is only performed 
between neighboring nodes) and congestion (i.e., communication links have 
bounded bandwidth). Security is usually not specified as a desired requirement 
of the distributed algorithm and the main efficiency criterion is the round 
complexity (while respecting bandwidth limitation).

Albeit being a rather virgin objective in the area of distributed graph algorithms, 
the notion of security in multi-party computation (MPC) is one of the 
main themes in the Cryptographic community.
Broadly speaking, \emph{secure} MPC protocols allow 
parties to jointly compute a function $f$ of their inputs without revealing 
anything about their inputs except the output of the function. 
There has been tremendous 
work on MPC protocols, starting from general feasibility results 
\cite{Yao82b,GoldreichMW87,BenorGW88,ChaumCD88} that apply to any functionality 
to protocols that are designed to be 
extremely efficient for 
specific functionalities \cite{Ben-DavidNP08,Ben-EfraimLO16}. 
There is also a wide range of security notions:
information-theoretic security or security that is based on computational assumptions, the adversary is 
either semi-honest or malicious\footnote{A semi-honest adversary
does not deviate from the described protocol, but may run any computation 
on the received transcript to gain additional information. A malicious 
adversary might arbitrarily deviate 
from 
the protocol.} and in might collude with several 
parties. 

Most MPC protocols are designed for the clique networks where 
every two parties have a secure channel between them. The works that do 
consider general graph topologies usually take the following framework.
%The common MPC framework for designing secure protocols for general graphs is 
%as follows. 
For a given function $f$ of interest, design first a protocol for 
securely computing $f$ in the simpler setting of a clique network,
%(e.g., in 
%which each two parties have a secure channel)
then ``translate" this protocol 
to \emph{any} given graph $G$. 
Although this framework yields protocols that are secure in the strong sense (e.g., handling 
collusions and a malicious adversary), they do not quite fit the framework of 
distributed graph algorithms, and simulating these protocols in the  \congest\ 
model results in a large overhead in the round complexity. It is important to note that the blow-up in the 
number of rounds might occur regardless of the security requirement; for instance, when
the desired function $f$ is non-local, its distributed computation in general 
graphs might be costly with respect to rounds even in the \emph{insecure} 
setting. %Ultimately, 
In the lack of distributed graph algorithms for general graphs that are both
secure and efficient compared to their \emph{non-secure} 
counterparts, we ask:
\begin{quote}
\begin{center}
{\em
How to design distributed algorithms that are both \textbf{\textit{efficient}} 
(in terms of round complexity) and \textbf{\textit{secure}} (where nothing is 
learned but the 
desired output)?}
\end{center}
\end{quote}
%One tedious way to attack this challenge is to go through the most popular 
%distributed problems (e.g., coloring, MIS, matching) and design a secure 
%distributed algorithm for each of them one by one.
%\eyl{talk about the perfect privacy we achieve}
%Much more desirable, however, is to have a general recipe for adding security 
%to existing algorithms, while incurring a small overhead in the round 
%complexity.
We address this challenge by introducing a new framework for secure distributed graph 
algorithms in the \congest\ model. Our approach is 
different from previous secure algorithms mentioned above and allows one to 
decouple between the price of locality and the price of security of a given 
function $f$. 
In particular, 
instead of adopting a clique-based secure protocol for $f$, we take the 
best distributed algorithm $\A$ for computing $f$, and then compile $\A$ 
to a secure algorithm $\A'$.
This compiled algorithm 
respects the same bandwidth limitations relies on no setup phase nor on any 
computational assumption and works for (almost) any graph.
The price of security comes as an overhead in the number of rounds.
%For an $n$-vertex graph of diameter $D$ and max-degree $\Delta$ if the 
%original (non-secure) algorithm worked in $r$ rounds then the secure compiled 
%algorithm works in $r'=\widetilde{O}(r \cdot D \cdot \poly(\Delta))$ 
%rounds\footnote{Throughout, the term $\widetilde{O}(\cdot)$ hides poly-log $n$ 
%factors.}. For bounded degree graphs, this overhead is {\em existentially} 
%optimal for a round-by-round compiler as the $\Omega(D)$ factor can be shown 
%to be necessary.
Before presenting the precise parameters of the secure compiler, we first discuss the security notion used in this paper.

\paragraph{Our Security Notion.}
%\paragraph{What is a Secure Distributed Algorithm?}
Consider a (potentially insecure) distributed algorithm $\A$.
Intuitively, we say that a distributed algorithm $\A'$ {\em securely} simulates 
$\A$ if (1) both algorithms have the exact same output for every node (or the exact 
same output distribution if the algorithm is randomized) and (2) each node 
learns ``nothing more'' than its final output. This strong notion of security is 
known as ``perfect privacy'' - which provides pure information theoretic 
guarantees
and relies on \emph{no computational assumptions}. The  perfect privacy notion 
is formalized 
by the existence of an 
(unbounded) simulator \cite{BenorGW88,GolVol2,Canetti00,AsharovL17}, with the 
following 
intuition: a node learns nothing 
except its own output $y$, from the messages it receives throughout the 
execution of the algorithm, if a simulator can produce the same output 
distribution while 
receiving only $y$ and the graph $G$.

Assume that one of the nodes in the network is an ``adversary'' that is trying to learn as 
much as possible from the execution of the algorithm. Then the security notion 
has some restrictions on the operations the adversary is allowed to perform: 
(1) The 
adversary 
is passive and only listens to 
the messages but does not deviate from the prescribed protocol; this is known in the literature as {\em semi-honest} security. (2) The 
adversary is {\em not} allowed to collude with other nodes in the network. As will be explained next, if the vertex connectivity of the graph is two, then this is the strongest adversary that one can consider. (3) 
The adversary gets to see the entire graph. That is, in this framework, the 
topology of the graph $G$ is not considered private and is not protected by the 
security notion. The private bits of information that are protected by our compiler are: 
the \emph{inputs} of the nodes (e.g., color) and the randomness 
chosen during the execution of the algorithm; as a result, the \emph{outputs} 
of the nodes are private (see \Cref{def:perfect-privacy} for precise 
details).

\paragraph{A Stronger Adversary: From Cliques to General Topology.}
The goal of this paper is to lay down the 
groundwork, especially the graph theoretic infrastructures for secure distributed graph algorithms. 
As a first step towards this goal, we consider the largest family of graphs for which secure computation can be achieved in the perfect secure setting. This is precisely the family of two-vertex connected graphs.

Handling this wide family of graphs naturally imposes restrictions on the power 
of the adversary that one can consider. In particular, we cannot hope to handle 
that standard adversary assumed in the MPC literature, which colludes with 
$\Omega(n)$ other parties. A natural limit on the adversarial collusion is the 
vertex connectivity of the graph. Indeed, if the graph is only $t$-vertex connected, 
then an adversary that colludes with $t$ nodes can receive all messages from 
one 
part of the graph to the other. That is, security for such a graph will imply 
a secure 
two-party computation, 
where each party simulates one connected component of the graph. Such two-party secure 
protocols where shown to be impossible for merely any interesting function
\cite{Kushilevitz89}. Thus, for 2-vertex connected graphs, one cannot achieve a 
secure simulation with an adversary that 
colludes with more than a node.
Moreover, the works of \cite{Dolev82} combined with \cite{DolevDWY93} show that 
if 
the adversary is {\em malicious} and colludes with $t$ nodes then graph must be 
$(2t+1)$ connected for security to hold. 
%Such dependency on the vertex connectivity appears also in the setting of Byzantine agreement \cite{Dolev82,DworkPPU88}.

We believe that the framework provided in this paper, and the private neighborhood trees in particular, serve the basis for stronger security guarantees in the future, for highly connected graphs. 
  
\subsection{Our Results}
Our end result is the first \emph{general compiler} that can take any 
``natural'' 
(possibly insecure) distributed 
algorithm to one that has perfect security.
A ``natural'' 
distributed algorithm is one where the local computation at each node can be 
performed in 
polynomial time.
Through the paper, the $\widetilde{O}(\cdot)$ notation hides poly-logarithmic 
terms in the number of vertices $n$. Recall, that $G$ is 2-vertex 
connected (or bridgeless) if for all $u \in V$ the graph $G'=(V \setminus \{u\},E)$ is connected.

\begin{theorem}[Secure Simulation, Informal]\label{thm:secure-algorithm}
Let $G$ be a 2-vertex connected  $n$-vertex 
graph with diameter $D$ and maximal 
degree $\Delta$. Let 
$\mathcal{A}$ be a {\em natural} distributed algorithm that runs on 
$G$ in $r$ rounds. Then, $\mathcal{A}$ can be transformed to an equivalent 
algorithm $\mathcal{A}'$ 
with perfect privacy which runs in $\widetilde{O}(rD \cdot \poly(\Delta))$ 
rounds.
\end{theorem}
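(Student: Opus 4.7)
The plan is to simulate $\mathcal{A}$ round-by-round. I would maintain the invariant that, at the start of each simulated round, the internal state of each node $v$ (which in $\mathcal{A}$ would consist of its input, random coins, and past received messages) is held \emph{not} by $v$ itself but in secret-shared form across its neighborhood $N(v)$. This prevents $v$ from ever learning anything about its own evolving state beyond what its final output reveals, and is enabled by $2$-vertex connectivity, which guarantees $|N(v)|\ge 2$ so that a non-trivial sharing exists.

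To simulate one round of $\mathcal{A}$ at $v$, the ``next step function'' $f_v$ (polynomial-time since $\mathcal{A}$ is natural) is applied to the shares currently held by $N(v)$. I would realize this by invoking a Private Simultaneous Messages (PSM) protocol whose parties are exactly the neighbors in $N(v)$. The PSM output must provide (i) a fresh secret sharing of $v$'s updated state, redistributed back over $N(v)$, and (ii) the outgoing messages of $v$, where each $u\in N(v)$ learns only its intended message $m_{v\to u}$. To funnel the PSM encodings to a virtual collector and to distribute the decoded outputs \emph{without} letting $v$ see anything, all communication is routed through the private neighborhood tree $T(v)$, which spans $N(v)$ and bypasses $v$ entirely.

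For one round of $\mathcal{A}$, the sub-protocol above is run in parallel for every $v\in V$. A single sub-protocol takes $\widetilde{O}(\dilation)=\widetilde{O}(\Delta\cdot D)$ rounds of tree traversal and carries $\poly(\Delta)$ bits of PSM data per tree edge. Running all $n$ copies simultaneously, every edge of $G$ is loaded by at most $\widetilde{O}(\congestion)=\widetilde{O}(D)$ trees, so standard packet scheduling (Leighton--Maggs--Rao style) completes the entire round in $\widetilde{O}(\dilation+\congestion\cdot\poly(\Delta))=\widetilde{O}(D\cdot\poly(\Delta))$ \congest{} rounds; multiplying by the $r$ rounds of $\mathcal{A}$ yields the claimed bound. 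Security follows by composing the perfect PSM simulator across the $r$ invocations: for a corrupted $v^\star$, its transcript consists of (a) PSM ciphertexts it merely relays as an internal vertex of trees $T(u)$ with $u\ne v^\star$, which are statistically independent of hidden inputs, and (b) its role as a party in the PSM for $T(v^\star)$, whose view the PSM simulator can reproduce from $v^\star$'s final output alone.

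The main obstacle I expect lies in item (i) of the PSM specification: designing a PSM-style primitive that (a) outputs re-randomized shares of the \emph{next} state rather than a single public value, (b) respects the \congest{} bandwidth while keeping share sizes $\poly(\Delta)$, and (c) can be executed \emph{along} a tree of depth $\widetilde{O}(\Delta D)$ rather than on a star topology with a trusted referee. Reconciling these three constraints -- particularly showing that the tree-based evaluation does not degrade either the round complexity or the perfect-privacy guarantee -- is the technical heart of the compiler and is what dictates the $\poly(\Delta)$ overhead per simulated round.
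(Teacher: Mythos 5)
Your high-level plan -- simulate round by round, hide each node's evolving state, run a PSM-like protocol over $\Gamma(v)$, use the private neighborhood tree $T(v)$ for the $v$-avoiding communication, and schedule all $n$ copies with random delays -- matches the paper's architecture. But the obstacle you flag at the end is a genuine gap, and the paper resolves it with a different arrangement of roles than the one you sketch. You place the state of $v$ entirely in secret-shared form over $N(v)$ and then ask the PSM primitive to (a) output fresh \emph{distributed} shares rather than a single value and (c) be evaluated along a tree with a ``virtual collector.'' Neither of these is what PSM gives you: PSM has one designated referee who receives all encodings over a star and recovers the single output $f(x_1,\dots,x_k)$. A re-sharing output and a tree-evaluated PSM are different (and harder) primitives, and you correctly note you don't know how to build them.

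The paper sidesteps this by keeping the PSM topology exactly a star and making $v$ \emph{itself} the referee. The state $\sigma_i$ is not shared over $\Gamma(v)$; it is held by $v$ as a one-time-pad ciphertext $\widehat\sigma_i=\sigma_i\oplus R_{\sigma_i}$, with the pad $R_{\sigma_i}$ held by a single designated neighbor $v_1$ (so it is a two-out-of-two sharing between $v$ and $v_1$, not a $\Delta$-way sharing). The wrapped function $f_i'$ takes as input the ciphertexts, the decryption keys, fresh encryption keys, and shares of the coin tosses; it decrypts, applies $f_i$, and re-encrypts. The neighbors then run an ordinary PSM for $f_i'$ with $v$ as the referee: the encodings $M_{v_j}$ go directly over the $\Delta$ edges $v_j\to v$, not through the tree. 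The tree $T(v)$ is used for only one thing -- letting $\Gamma(v)$ agree on the shared PSM randomness $R$ while bypassing $v$. Because $v$ decodes to an encryption of its next state and of its outgoing messages, $v$ learns nothing, and the decryption keys stay with the neighbors. Finally, point (b) is handled by a separate step you omit: a natural (poly-time) $f_i$ is first unrolled gate by gate into $\poly(\Delta,\log n)$ virtual rounds, each computable in $O(\log\Delta+\log\log n)$ space, because PSM communication is exponential in the space complexity. Without this unrolling a direct PSM for a poly-time $f_i$ would blow up the message sizes far beyond $\poly(\Delta)$.
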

We note that our compiler works for any distributed algorithm rather than only 
on natural ones. The number of rounds will be proportional to the space 
complexity of the internal computation functions of the distributed algorithm (an explicit statement for any 
algorithm can be found in \Cref{remark:general}). 

This quite general framework is made possible due to fascinating
connections between ``secure cryptographic definitions"
and natural combinatorial graph properties. Most notably is a combinatorial 
structure that we call \emph{private 
neighborhood trees}. Roughly 
speaking, a private neighborhood tree 
of a 2-vertex connected graph $G=(V,E)$ is a collection of $n$ trees, one per node 
$u_i$, where each tree $T(u_i)\subseteq G \setminus \{u_i\}$ contains all the 
neighbors of $u_i$ but does not contain $u_i$. Intuitively, the private 
neighborhood trees allow all neighbors $\Gamma(u_i)$ of all nodes $u_i$ to 
exchange a secret without $u_i$. Note that these covers
exist if and only if the graph is 2-vertex 
connected. We define a $(\dilation,\congestion)$-private 
neighborhood trees in which each tree $T(u_i)$ has depth at most $\dilation$ 
and each edge belongs to at most $\congestion$ many trees. This allows the distributed compiler to 
use all trees simultaneously in $\widetilde{O}(\dilation+\congestion)$ rounds, by employing the random delay approach \cite{leighton1994packet,Ghaffari15}. 

\begin{theorem}[Private Neighborhood Trees]\label{thm:neighborhood-cover}
Every $2$-vertex connected graph with diameter $D$ and maximum 
degree $\Delta$ has a $(\dilation,\congestion)$-private neighborhood 
trees with $\dilation = \widetilde{O}(D \cdot \Delta)$ and $c = 
\widetilde{O}(D)$.
\end{theorem}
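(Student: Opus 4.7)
The plan is to construct the collection of private neighborhood trees on top of a \emph{cycle cover} of $G$. Fix a BFS tree $T$ of $G$ rooted at an arbitrary vertex $r$; since $G$ has diameter $D$, this tree has depth at most $D$. The key combinatorial ingredient I would invoke is a cycle cover: for every edge $e \in E(G)$, a simple cycle $C_e \subseteq G$ containing $e$ with $|C_e| = \widetilde{O}(D)$, such that each edge of $G$ belongs to at most $\widetilde{O}(D)$ of the cycles $\{C_e\}_{e \in E}$. For non-tree edges of $T$, the fundamental cycle already has length $O(D)$; for tree edges this requires a cycle-cover construction from the fault-tolerant network design literature, which applies because 2-vertex connectivity implies 2-edge connectivity.

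With the cycle cover in hand, for each edge $(u,v) \in E(G)$ the cycle $C_{(u,v)}$ is simple and visits $u$ exactly once. Removing $u$ from $C_{(u,v)}$ thus yields a \emph{detour path} $Q_{u,v}$ in $G \setminus \{u\}$ of length $\widetilde{O}(D)$ whose two endpoints are $v$ and some other neighbor of $u$. I then define $T(u)$ to be a spanning tree of $\Gamma(u)$ inside the subgraph $H_u := \bigcup_{v \in \Gamma(u)} Q_{u,v}$, which lies entirely in $G \setminus \{u\}$.

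The dilation bound is immediate: $H_u$ is a union of at most $\Delta$ paths of length $\widetilde{O}(D)$, hence has at most $\widetilde{O}(D \cdot \Delta)$ edges, and its spanning tree has depth at most this size, i.e.\ $\widetilde{O}(D \cdot \Delta) = \dilation$. The congestion bound follows directly from the cycle cover: an edge $e \in E(G)$ appears in $T(u)$ only if $e \in C_{(u,v)}$ for some $v \in \Gamma(u)$, so summing over all $u \in V$, each appearance of $e$ in a cycle $C_{e'}$ is counted at most twice, once per endpoint of $e'$. The total multiplicity is therefore at most $2 \cdot \widetilde{O}(D) = \widetilde{O}(D) = \congestion$.

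The main obstacle I anticipate is showing that $H_u$ actually \emph{connects} all of $\Gamma(u)$ rather than merely pairing it up: the detour path $Q_{u,v}$ terminates at some neighbor of $u$, not a pre-specified one, so the induced functional digraph on $\Gamma(u)$ may decompose into several weakly connected components. I would address this by enriching the construction, either by iterating detours to propagate from each neighbor of $u$ to a designated anchor vertex (such as the parent of $u$ in $T$) via $O(1)$ chained hops, or by invoking a stronger cycle-cover variant that guarantees connectivity on $\Gamma(u)$ up to polylogarithmic loss in the length and congestion parameters. The secondary difficulty is the cycle-cover construction itself, which I treat as a black box from existing work on 2-edge-connected graphs.
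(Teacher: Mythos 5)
Your approach starts from the right combinatorial primitive --- a low-congestion cycle cover --- and you correctly identify the central obstacle: the detour paths $Q_{u,v}$ induce only a \emph{pairing} (or functional relation) on $\Gamma(u)$, not a connected structure, and $H_u$ may split $\Gamma(u)$ into as many as $\Delta/2$ components. But you do not resolve this, and neither of the two fixes you gesture at closes the gap. The ``chained hops to an anchor'' idea does not work because applying the \emph{same} cycle cover again cannot make further progress: the cycles covering edges incident to $u$ are already exhausted, so you have no mechanism to produce a new $u$-avoiding path between two distinct components of $H_u$. Invoking ``a stronger cycle-cover variant that guarantees connectivity on $\Gamma(u)$'' is exactly the statement to be proved, not an available black box.

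The paper's proof supplies the missing idea: it iterates for $O(\log\Delta)$ phases, and in each phase $t$ constructs a fresh \emph{auxiliary} graph $\widetilde{G}_t$ in which every real node $u$ is split into virtual copies $\widetilde{u}_1,\ldots,\widetilde{u}_k$, one per current connected component of $u$'s forest, with the edges of $G$ rewired between virtual copies accordingly. Running the cycle-cover algorithm on $\widetilde{G}_t$ (which one must check remains $2$-vertex connected --- this is Claim~\ref{cl:twovertex} in the paper) forces every cycle covering a virtual edge $(u,\widetilde{u}_j)$ to exit through a \emph{different} virtual copy $\widetilde{u}_i$, hence it merges two distinct components after projection back to $G$. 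This halves the component count each phase, giving a single tree after $\log\Delta$ phases, at a multiplicative cost of $O(\log\Delta)$ in congestion and $O(\Delta)$ in dilation. Without this re-encoding-and-recompute step, your one-shot construction of $H_u$ simply does not produce a spanning structure on $\Gamma(u)$, so the ``spanning tree of $\Gamma(u)$ inside $H_u$'' you define need not exist. The dilation and congestion arithmetic you give is fine as far as it goes; the gap is entirely in connectivity.
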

Our secure compiler assumes that the nodes in the graph know the private neighborhood trees. 
That is, each node knows its parent in the trees it participates in. Thus, 
in order for the whole transformation to work, the network needs to run a 
distributed algorithm to learn the cycle cover. This can be achieved by 
performing a prepossessing phase to compute the private trees and 
then run the distributed algorithm.

%This preprocessing phase can be done once and for all, and moreover, we show that it can constructed distributively in $\widetilde{O}(n+D \cdot \Delta)$  rounds (see \Cref{lem:preproc-cover}).

Since the barrier of \cite{Kushilevitz89} can be extended to a cycle 
graph, the linear dependency in $D$ in the round complexity of 
our compiler is unavoidable. 

The flow of our constructions is summarized in \Cref{fig:summary}.

\begin{figure}[t!]
	\centering

	\begin{tikzpicture}
	[
	every node/.style={
		align = center,
		fill          =  black!5,
		inner sep     = 6pt,
		minimum width = 22pt,
		rounded corners=.1cm}
	]
	
	\node (cycle-cover) at (-6.5,0) {Low-Congestion\\Cycle Covers \cite{ParterY18}};
	\node (private-neighborhood-trees) at (-2.3,-1.5) {Private 
	Neighborhood\\Trees   
	(Theorem \ref{thm:neighborhood-cover})};
	\node (secure-algorithm) at (-6,-4) {Secure Distributed\\Algorithm 
	(Theorem~\ref{thm:secure-algorithm})};
	
	\draw[->, line width=1.5] (cycle-cover) -- (private-neighborhood-trees);
	
	\draw[->, line width=1.5] (private-neighborhood-trees) -- 
	(secure-algorithm);
	\end{tikzpicture}

	\caption{An illustrated summary of our results.}
	\label{fig:summary}
\end{figure}

\subsection{Applications for Known Distributed Algorithms}
\Cref{thm:secure-algorithm} enables us to compile almost all of the known 
distributed algorithms to a secure version of them.
It is worth noting that {\em deterministic} algorithms for problems in which 
the nodes do {\em not have any input} cannot be made secure by our approach 
since these algorithms only depend on the graph topology which we do not try to 
hide. 
Our compiler is meaningful for algorithms where the nodes have input or for 
randomized 
algorithms which define a distribution over the output of the nodes. For 
instance, the randomized 
coloring algorithms (see, e.g., \cite{barenboim2013distributed}) which sample a 
random legal coloring of 
the graph can be made secure.
Specifically, we get a distributed algorithm that 
(legally) colors a graph (or computes a legal configuration, in general), while 
the information that each node learns at the end is as if a centralized entity 
ran the algorithm for the entire network, and revealed each node's output 
privately (\ie revealing $v$ the final color of $v$). 

Our approach captures global (\eg MST) as well as many local problems 
\cite{naor1995can}. 
The MIS algorithm of Luby \cite{luby1986simple} along with our compiler yields 
$\widetilde{O}(D \cdot \poly(\Delta))$ secure algorithm according to the notion 
described above. 
Slight variations of this algorithm also give the $O(\log n)$-round 
$(\Delta+1)$-coloring algorithm (\eg Algorithm 19 of 
\cite{barenboim2013distributed}). Combining it with our compiler, we get 
a secure $(\Delta+1)$-coloring algorithm with round complexity of 
$\widetilde{O}(D \cdot 
\poly(\Delta))$.
Using the Matching algorithm of Israeli and Itai \cite{israeli1986fast} we get 
an $\widetilde{O}(D\cdot \poly(\Delta))$ secure maximal matching algorithm.  
Finally, another example comes from distributed algorithms for the Lov{\'a}sz 
local lemma (LLL) which have received much attention recently 
\cite{brandt2016lower,FischerG17,chang2017time} for the class of bounded degree 
graphs. 
Using \cite{chung2017distributed}, most of these (non-secure) algorithms for 
defective coloring, frugal coloring, and list vertex-coloring can be made 
secure within $\widetilde{O}(D)$ rounds.

\subsection{Related Work}
There is a long line of research on secure multiparty computation.
The most general results are \cite{Yao82b,GoldreichMW87,BenorGW88,ChaumCD88} 
which provide a protocol for computing any function $f(x_1,\ldots,x_n)$ over 
$n$ inputs. The significance of these results is that they work for any 
function and provide strong security (either assuming computational 
assumptions, or information-theoretic as is considered in this work). More 
specifically, these results can tolerate the adversary colluding with $t$ nodes 
as long as $t < n/2$ and achieve security in the semi-honest model. If the 
adversary is malicious (\ie she can deviate from the prescribed protocol) then 
the protocols can handle $t < n/3$. These general results assume that every two 
parties have a secure direct channel between them, that is, they assume that 
the interaction pattern is a clique.

\paragraph{General Graph Topologies.}
As opposed to general MPC, there are not many protocols that work for general 
graph interaction patterns. The works of \cite{HaleviIJKR16,HaleviI0KSY17} 
provide secure protocols for any function $f$ and for any graph pattern, 
however, they have a few drawbacks. First, they both assume some form of a 
setup phase (recall that our solution assumes no such setup). The work of 
\cite{HaleviIJKR16} assumes a setup of correlated randomness and provides 
several protocols with information-theoretic security. Even the most efficient 
protocol (one for symmetric functions) must send more than $n^2$ bits on each 
edge. In the CONGEST model, this would take $n^2/\log n$ rounds which is not an
interesting regime for this model (for other functions the communication 
complexity is much worse). The work of \cite{HaleviI0KSY17} assumes a 
public-key infrastructure (PKI) setup together with a common random string, and 
moreover provides only solutions based on (heavy) computational assumptions. 
More importantly, the number of rounds is not a parameter they optimize and 
will be at least $O(n^2)$ even for simple functions. Such protocols were also 
implicitly considered in \cite{GoldwasserGG0KLSSZ14,BeimelGIKMP14} but suffer 
from similar drawbacks.

\paragraph{MPC with Locality Constraints.} 
Other MPC works study protocols with small locality, that is, that parties 
communicate only with a small number of other parties. The work of 
\cite{BoyleGT13} (followed by \cite{ChandranCGGOZ15}) provides a secure MPC 
protocol for general functions where for $n$ parties each party is only 
required to communicate with at most $\polylog(n)$ other parties using 
$\polylog(n)$ rounds, where in each round messages are of size $\poly(n)$. 
Their protocols assume computation assumptions and a setup phase. Moreover, we 
stress that they still require a \emph{fully} connected network, and then 
parties communicate with a small number of dynamically chosen parties, but with 
very large messages (compared to the $\log n$ bound in our model). On the 
positive side, they achieve a strong security notation where the adversary can 
collude with up to about $n/3$ nodes. 

\paragraph{MPC for Bounded Degree Graphs.} 
The works of \cite{GarayO08,ChandranGO10} consider MPC protocols for bounded 
degree graphs. Except for the graph restriction, they consider arbitrary 
interaction pattern, information-theoretic security and no setup phase (similar 
to our setting). They provide a protocol for computing any function $f$ while 
``giving up'' on security for some nodes (\ie the adversary might learn the 
private inputs of these nodes). While they have some restrictions on the 
adversary, their security notion is stronger than ours (as we do not allow 
collusions). However, their protocol is not round nor bandwidth efficient with 
respect to our parameters. They simulate the general MPC protocol of 
\cite{BenorGW88}, by replacing each message sent from player $i$ to player $j$, 
one-by-one, with a distributed broadcast protocol that takes more than $n$ 
rounds by itself. Thus, the overall protocols will require at least $n^2$ 
rounds and enters the uninteresting regime of parameters for the CONGEST model.

\paragraph{The Key Differences to Our Approach.}
Our approach is quite different from the algorithms mentioned above. In particular, 
instead of taking some 
function $f$ and trying to build the best secure protocol for it, our compiler 
takes the best \emph{distributed} algorithm for computing $f$, and then compiles it 
to a secure one. This makes us competitive, in terms of the number of rounds, with the non-secure distributed algorithm. In the MPC approach, it is harder to decouple between the price of locality and the price of security as adopting a clique-protocol for computing a function $f$ to a general graph might blow up the number of rounds, regardless of the security constraint. 
One exception is the work of \cite{KearnsTW07} who showed how to compile a distributed algorithm for a {\em specific} task of Belief Propagation to a secure one. The compiler is designed specifically for this task and does not work for others. Moreover, the security that is achieved is based on computational assumption and specifically public-key encryption with additional properties. 

Finally, we note that a compiler that works for even a weaker adversary was 
proposed in \cite{ParterY18}. In their setting, the adversary can listen to the 
messages of a single \emph{edge} where in our setting the adversary listens to 
all messages received by a single \emph{node}. Thus, the adversary gets 
messages from $\Delta$ different edges in addition to getting the private 
randomness chosen by the node itself.
\section{Our Approach and Techniques}
We next describe the high-level ideas of our secure compiler. In 
\cref{sec:security}, we describe how the secure computation in the distributed 
setting boils down into a novel graph theoretic structure, namely, private 
neighborhood trees. The construction of private trees is shown in 
\Cref{subsec:privaencover}.

\subsection{From Security Requirements to Graph Structures}\label{sec:security}
In this section, we give an overview of the construction of a secure compiler 
and begin by showing how to compile a single round distributed algorithm into a 
secure one. This single-round setting already captures most of the challenges 
of the compiler. At the end of the section, we describe the additional ideas 
required for 
generalizing this to arbitrary $r$-round algorithms.

\paragraph{Secure Simulation of a Single Round.}
Let $G$ be an $n$-vertex graph with maximum degree $\Delta$, and for any node 
$u$ let 
$\sigma_u$ be its initial state (including its ID, and private input). Any 
single round algorithm can be described by a function $f$ that maps the initial 
state $\sigma_u$ of $u$ and the messages $m_1,\ldots, m_{\Delta}$ received from 
its neighbors to the output of the algorithm for the node 
$u$. 

As a concrete running example, let $\A$ be a single round algorithm that 
verifies vertex coloring in a graph. In this algorithm, the initial state of a node includes 
a color $c_u$, and nodes exchange their color with their neighbors and output 
$1$ if and only if all of their neighbors are colored with a color that is 
different from $c_u$.
It is easy to see that in this simple algorithm, the nodes learn more than the 
final output of the algorithm, namely, they learn the color of their neighbors.
Our goal is to compile this algorithm to a secure one, where nothing is learned 
expect the final output. In particular, where nodes do not learn the color of 
any other node in the network. This fits the model of Private 
Simultaneous Messages (PSM) that we describe next. We stress that other MPC protocols might be suitable here as well (\eg \cite{Yao82b,GoldreichMW87,BenorGW88}), 
however, the star topology of PSM model makes the best fit in terms of simplicity and parameters.

The PSM model was introduced by Feige, Kilian and Naor \cite{FeigeKN94} (and 
later generalized by \cite{IshaiK97}) as a ``minimal'' 
model of MPC for securely computing a function $f$. In this model, there are 
$k$ clients
that hold inputs $x_1,\ldots, x_k$ which are all connected to a single server 
(i.e., a star topology). The clients share private randomness $R$ that is 
hidden 
from the server. The goal is for the server to compute $f(x_1,\ldots, 
x_k)$ while learning ``nothing more'' but this output. The protocol consists 
of a single round where each client $i$ sends a message to the server that 
depends on its own input $x_i$ and the randomness $R$. The server, using these 
messages, computes the final output $f(x_1,\ldots, 
x_k)$. In \cite{FeigeKN94}, it was shown that any function $f$ admits such a 
PSM protocol with information-theoretic privacy. The complexity measure of the 
protocol is the size of the messages (and shared randomness) which are 
exponential in the space complexity of the function $f$ (see 
\Cref{def:psm} and \Cref{thm:psm} for precise details).

Turning back to our single round distributed algorithm $\cA$, the secure simulation of $\cA$ can be based 
on the PSM protocol for securely computing the function $f$, the  function that characterizes algorithm $\A$. In this view, each node $u$ in the 
graph acts as a server in the PSM protocol, while its (at most $\Delta$) neighbors in the graph act as the
clients. %The function that they compute securely is exactly the function $f$ 
%that describes the algorithm $\A$. 

%There is one critical missing piece: the . In our analogy, the node $u$ is the 
%\emph{server} 
%and its neighbors $v_1,\ldots, v_{\Delta}$ are the \emph{clients}, hence the 
%star-shaped communication of \cite{FeigeKN94} fits exactly the setting where a 
%node needs to securely compute a function of its neighbors in the graph. 
In order to simulate the PSM protocol of \cite{FeigeKN94} in the \congest\ model, one has to take care of several 
issues. The first issue concerns the bandwidth restriction; in the \congest\ 
model, every neighbor $v_i$ can send only $O(\log n)$ 
bits to $u$ in a single round. Note that the PSM messages are exponential in 
the space complexity of the function $f$, and that in our setting the total 
input of $f$ has $O(\Delta\log n)$ bits. Thus,
in a na\"ive implementation only functions $f$ that are computable in 
logarithmic 
space can be computed with the desired overhead of $\poly(\Delta)$ rounds. 
Our goal is to capture a wider family of functions, in particular, the class of 
natural algorithms in which $f$ is computable in polynomial time. Therefore, in 
our final compiler, we do not compute $f$ in a single round, but rather compute it gate-by-gate. Since 
in natural algorithms $f$ is computed by a circuit of polynomial size, and 
since a single gate is computable in 
logarithmic space, we incur a total round overhead that is polynomial in 
$\Delta$.
%This is why our compiler is restricted to ``natural'' distributed algorithm, 
%where the function $f$ can be computed in polynomial time (\ie has 
%$\poly(\Delta)$ many gates). 
In what follows, assume that $f$ is computable 
in logarithmic space.

Another issue to be resolved is 
that in the PSM model, the server did not hold an input whereas in our 
setting the function $f$ depends not only on the input of the neighbors but on 
the input of the node $u$ as well. This subtlety is handled by having $u$ 
secret share\footnote{A secret share of $x$ to $k$ parties is a random tuple 
$r_1,\ldots,r_k$ such that $r_1 \oplus \dots \oplus r_k=x$.} its input to the 
neighbors.

\paragraph{How to Exchange Secrets in a Graph?}
There is one final critical missing 
piece that requires hard work: the neighbors of $u$ must share private 
randomness $R$ that is {\em not} known to $u$.
Thus, the secure simulation of a single round distributed algorithm can be 
translated into the following problem:
\begin{quote}
\begin{center}
{\em
How to share a secret $R_u$ between the neighbors of each node $u$ in the graph 
while hiding it from $u$ itself?}
\end{center}
\end{quote}
Note that this task should be done for all nodes $u$ in the graph $G$ 
simultaneously. That is, for every node $u$, we need the neighbors of $u$ to 
share a private random string that is 
\emph{hidden} from $u$. Our solution to this problem is information 
theoretic and builds upon specific graph structures. However, we begin by 
discussing a much simpler solution, yet, based on computational assumptions.

\paragraph{A Solution Based on Computational Assumptions.}
In order to get a computationally based solution, we assume the existence of a public-key encryption scheme.
For simplicity, we assume that our public-key encryption scheme has two properties: (1) the encryption does 
not increase the size of the plaintext, and (2) the length of the public-key is 
$\lambda$ -- the security 
parameter of the public-key scheme. 
We next describe an $\widetilde{O}(\Delta + \lambda)$ protocol that computes the secret $R$ which is shared by all neighbors of $u$ while hiding it from $u$, under the public-key assumption. 

Consider a node $u$ and let $v_1,\ldots,v_{\Delta}$ be its neighbors. For 
simplicity, assume that $\Delta$ is a power of $2$.
First, $v_1$ computes the random string $R$, this string will be shared with 
all $v_i$'s nodes in $\log \Delta$ phases. In each phase $i\geq 0$, we assume 
that all the vertices $v_{1},\ldots, v_{k_i}$ for $k_i=2^{i}$ know $R$. We will 
show that at the end of the phase, all vertices $v_{1},\ldots, v_{k_i}, 
v_{k_i+1},\ldots, v_{2k_i}$ know $R$. This is done as follows. Each vertex 
$v_{k_i+j}$ sends its public-key to $v_{j}$ via the common neighbor $u$, 
$v_{j}$ encrypts $R$ with the key of $v_{k_i+j}$ and sends this encrypted 
information to $v_{k_i+j}$ via $u$. As the length of the public-key is 
$\lambda$ and the length of the encrypted secret $R$ needed by the PSM protocol has $O(\Delta \log n)$ bits,
this can be done in $\widetilde{O}(\Delta + \lambda)$ rounds. It is 
easy to see that $u$ cannot learn the secret $R$ under the public-key 
assumption.  
%
%In order for $v_i$'s nodes to share random string $R$ that is hidden from $u$, one of $u$'s neighbors, say $v_1$, 
%chooses $R$, and shares it with the neighbors in $O(\log \Delta)$ phases.
%First, $v_2$ chooses a public key and sends it to $v_1$ via $u$, this allows
%$v_1$ to encrypt $R$ using the public key and send it to $v_2$ through $u$. 
%At that point, both $v_1,v_2$ know the secret $R$, $v_3$ (resp., $v_4$) sends its public key through $u$ to $v_1$ (resp., to $v_2$), this allows $v_1$ (resp., $v_2$) to send the encrypted secret to $v_3$ (resp., $v_4$) via $u$. 
%In this way, each phase increase 
%Now, $v_1,v_2$ 
%know the secret and the process is repeated with $v_3,v_4$. After $\log \Delta$ 
%rounds all neighbors get the secret $R$. It is easy to see that $u$ cannot 
%learn the secret $R$ in this process. In each step, the total communication on 
%a single each is $|R| + \lambda = \widetilde{O}(\Delta + \lambda)$. 
%Thus, the 
%total number of rounds is also $\widetilde{O}(\Delta + \lambda)$. 
%The result is 

Using this protocol with the PSM machinery yields a protocol that compiles any $r$-round algorithm $\cA$ (even non-natural one) into a secure algorithm $\cA$ with $r'=\widetilde{O}(r(\Delta + \lambda))$ rounds.
We note that it is not clear what is $\lambda$ as a function of the number of nodes $n$. 
Clearly, if $\lambda=\Omega(n)$, this overhead is quite large. The benefit of our perfect security scheme is that 
it relies on no computational assumptions, does not introduce an additional 
security parameter and as a result, the round complexity of the compiled 
algorithms depends only on 
the properties of the graph, e.g., number of nodes, maximum degree and 
diameter. Finally, the dependency on these graph parameters is existentially 
required.

\paragraph{Our Information-Theoretic Solution.}
Suppose two nodes, $v_1,v_2$, wish to share information that is hidden from a 
node $u$ in the information-theoretic 
sense. Then, they must use a $v_1$-$v_2$ path in $G$ that is \emph{free} from 
$u$. 
Hence, in order for the neighbors of a node $u$ to exchange private randomness, 
they must use a connected subgraph $H$ of $G$ that spans all the neighbors of $u$ but
does not include $u$. (This, in particular, explains our requirement for $G$ to 
be 
2-vertex connected.) Using this subgraph $H$, the neighbors can communicate 
privately (without $u$), and exchange shared randomness.

In order to reduce the 
overhead of the compiler, we need the diameter of $H$ to be as small as 
possible. Moreover, in the compiled algorithm, we will have the neighbors of 
all nodes $u$ in the graph exchange randomness simultaneously. Since there is 
a bandwidth limit, we need to have minimal overlap of the different subgraphs 
$H$. It is easy to see that for every vertex $u$, there exists a tree $T(u) \subseteq G\setminus \{u\}$ of diameter $O(\Delta \cdot D)$ that spans all the neighbors of $u$. However, an arbitrarily collection of trees $T(u_1),\ldots, T(u_n)$ where each $T(u_i)\subseteq G\setminus \{u_i\}$ might result in an edge that is common to $\Omega(n)$ trees. This is undesirable as it might lead to a blow-up of $\Omega(n)$ in the round complexity of our compiler.

Towards this end, we define the notion 
of private neighborhood trees which provides us the communication backbone for 
implementing this distributed $\PSM$ protocol in general graph 
topologies for all nodes simultaneously.
Roughly 
speaking, a private neighborhood tree 
of a 2-vertex connected graph $G=(V,E)$ is a collection of $n$ trees, one per node 
$u_i$, where each tree $T(u_i)\subseteq G \setminus \{u_i\}$ contains all the 
neighbors of $u_i$ but does not contain $u_i$. A 
$(\dilation,\congestion)$-private neighborhood trees in which each tree 
$T(u_i)$ has depth at most $\dilation$ 
and each edge belongs to at most $\congestion$ many trees.
This allows us to 
use all trees simultaneously and exchange all the private randomness in 
$\widetilde{O}(\dilation+\congestion)$ rounds.

Let $G$ be a $2$-vertex connected graph and let $D$ be the diameter of $G$. 
By the discussion above, achieving $(\dilation,\congestion)$-private neighborhood trees
with $\dilation=O(\Delta \cdot D)$ and $\congestion=n$ is easy, but 
yields an inefficient compiler.
We show how to construct 
$(\dilation,\congestion)$-private neighborhood trees for $d = \widetilde{O}(D 
\cdot \Delta)$ and $\congestion = \widetilde{O}(D)$, these parameters are nearly optimal existentially. The construction builds on 
a simpler and more natural structure called cycle cover. Using these private neighborhood trees, the neighbors of each node $u$ can exchange the $O(\Delta \log n)$ bits of $R_u$ in $\widetilde{O}(\Delta \cdot D)$ rounds.  This is done for all nodes $u$ simultaneously using the random delay approach.

Note that unlike the computational setting, here the round complexity is existentially optimal (up to poly-logarithmic terms) and only depends on the parameters of the graph.

%\paragraph{New Low-Congestion Structures in Graphs.}

\paragraph{Secure Simulation of Many Rounds.}
We have described how to securely simulate single round distributed 
algorithms. Consider an $r$-round distributed algorithm $\mathcal{A}$. In a 
broad view, $\mathcal{A}$ can be viewed as a collection of $r$ functions 
$f_1,\ldots,f_r$. At round $i$, a node $u$ holds a state 
$\sigma_i$ and needs to update its state according to a function $f_i$ that 
depends on $\sigma_{i}$ and the messages it has received in this round. 
Moreover, the same function $f_i$ computes the messages that $u$ will send to its neighbors in the next round.
That is, 
$$f_i(\sigma_{i}, m_{v_1 \to u}, \ldots, m_{v_\Delta \to u}) = 
(\sigma_{i+1},m_{u \to v_1},\ldots,m_{u \to v_{\Delta}})~.
$$
Assume 
that the final state $\sigma_r$ is the final output of the algorithm for node 
$u$.
A first attempt is to simply apply the solution for a single 
round for $r$ many times, round by round. As a result, the node $u$ learns all internal states 
$\sigma_1,\ldots,\sigma_r$ and nothing more. This is, of course, undesirable as 
these internal states, $\sigma_i$ for $i\leq r-1$, 
might already reveal much more information than the final output. Instead, 
we simulate the computation of the internal states
$\sigma_1,\sigma_2,\ldots,\sigma_r$, 
in an \emph{oblivious} manner without knowing any $\sigma_i$ except for 
$\sigma_r$ which is the final output of the algorithm.

Towards this end, in our scheme, the node $u$ holds an ``encrypted'' state, 
$\widehat{\sigma}_i$, instead of the actual state $\sigma_i$. The encryption 
we use is a simple one-time-pad where the key is a random string 
$R_{\sigma_{i}}$ such 
that $\widehat{\sigma}_i \oplus R_{\sigma_{i}} = \sigma_i$. The key 
$R_{\sigma_{i}}$ will be 
chosen by an arbitrary neighbor $v$ of $u$. In addition to the state, the node 
$u$ should not be able to learn the messages $m_{u \to v_1},\ldots,m_{u \to 
v_{\Delta}}$ sent to the neighbors in the original algorithm. Thus, each 
neighbor $v_j$ holds the key $R_{u \to v_j}$ that is used to encrypt its 
incoming message to $u$.
Overall, at any given round $i$, any node $u$ holds an encrypted 
state $\widehat{\sigma}_i$, and encrypted outgoing messages $\widehat{m}_{u \to 
v_1},\ldots,\widehat{m}_{u \to v_\Delta}$; the neighbors of $u$ hold the 
corresponding decryption keys. To compute the new state and the messages that $u$ sends to its neighbors in the 
next round, we use the PSM protocol as described in a single round but with 
respect to a function $f'_i$ which is related to the function $f_i$ and 
is defined as follows. The input of the function $f'_i$ is an encrypted state (of $u$), encrypted messages from its neighbors, keys for 
decrypting the input, and new keys for encrypting the final output. First, the function $f'_i$ 
decrypts the encrypted input to get the original state and the messages sent from its neighbors (i.e., the input for function $f_i$). 
Then, the function $f'_i$ applies $f_i$ to get the next state $\sigma_{i+1}$ and new outgoing 
messages from $u$ to its neighbors. Finally, it uses new encryption keys to encrypt the new output and 
finally outputs the new encrypted data (states and messages to be sent).
A summary of the algorithm for a single node $u$ is given in 
\Cref{fig:alg-A_u-intro}. The full proof is given in \Cref{sec:simulation}.

\begin{figure}[!h]
\begin{boxedminipage}{\textwidth}
\vspace{1mm} \textbf{Algorithm $\A_u$:}
\begin{enumerate}
	\item For each round $i=1\ldots r$ do:
	\begin{enumerate}
	\item $u$ holds the encrypted state  $\widehat{\sigma_i}$.
	\item The neighbor $v$ of $u$ samples new encryption keys.
	\item Run a $\PSM$ protocol with $u$ as the server to compute the 
	function $f'_i$:
	\begin{enumerate}
		\item $u$ sends its state $\widehat{\sigma_i}$ to a neighbor $v' \ne v$.
		\item Neighbors share private randomness via the \textbf{private 
		neighborhood trees}.
		\item $u$ learns its new encrypted state $\widehat{\sigma}_{i+1}$.
	\end{enumerate}	
	\end{enumerate}
	\item $v$ sends the final encryption key to $u$.
	\item Using this key, $u$ computes its final output $\sigma_r$.
\end{enumerate}
\end{boxedminipage}
\caption{The description of the simulation of algorithm $\mathcal{A}$ with 
respect to a node $u$.}
\label{fig:alg-A_u-intro}
\end{figure}

\subsection{Constructing Private Neighborhood Trees}
Our construction of private neighborhood trees is based on the construction of 
{\em low-congestion cycle-covers} from \cite{ParterY18}. For a bridgeless graph $G=(V,E)$, a low congestion cycle 
cover is a decomposition of graph edges into cycles which are both short and 
almost \emph{edge-disjoint}. Formally, a $(\dilation,\congestion)$-cycle cover 
of a graph $G$ is a collection of cycles in $G$ in which each cycle is of 
length at most $\dilation$, and each edge participates in at least one cycle 
and at most $\congestion$ cycles. In \cite{ParterY18} the following theorem was 
proven:

\begin{theorem}[Low Congestion Cycle Cover, \cite{ParterY18}]\label{thm:cyclecover}
Every bridgeless graph\footnote{A graph $G=(V,E)$ is bridgeless if $G\setminus \{e\}$ is connected for every $e$.} with diameter $D$ has a $(\dilation,\congestion)$-cycle cover where $\dilation=\widetilde{O}(D)$ and 
$\congestion = \widetilde{O}(1)$.
That is, the edges of $G$ can be covered by cycles such that each cycle is of 
length at most $\widetilde{O}(D)$ and each edge participates in at most 
$\widetilde{O}(1)$ cycles.
\end{theorem}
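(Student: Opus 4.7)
The plan is to anchor everything on a BFS tree $T$ of $G$ rooted at an arbitrary vertex, which has depth at most $D$. Every non-tree edge $(u,v)$ closes a fundamental cycle of length at most $2D+1$ with the $u$-to-$v$ tree path, and for every tree edge $e_T$ the bridgeless assumption guarantees a \emph{swap} non-tree edge whose endpoints lie in the two components of $T - e_T$, yielding a cycle through $e_T$ of length $O(D)$. So $\dilation = O(D)$ is essentially automatic; the entire challenge is to pick one cycle per edge so that $\congestion$ is $\widetilde{O}(1)$, since using all fundamental cycles naively can push $\Omega(n)$ cycles through a single tree edge near the root.

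The first step I would take is to split the task into covering non-tree edges and covering tree edges. For non-tree edges, rather than assigning each its own fundamental cycle, I would \emph{pair} non-tree edges into couples $(e_1,e_2)$ and build a single cycle containing both by splicing the two short tree paths connecting their endpoints; a careful pairing halves the number of cycles and, more importantly, lets us choose the splice so that each tree edge is used sparingly. For tree edges, I would process them level by level in $T$ and choose, for each uncovered $e_T$, one swap edge from its available swap set using a randomized rule that spreads swap choices uniformly over candidates.

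Concretely, I would impose a hierarchical partition of $T$ into subtrees of geometrically decreasing depth (pieces of radius roughly $2^i$ at level $i$, $i=1,\ldots,O(\log D)$), and at each level handle both the tree edges of that scale and the non-tree edges whose endpoints are co-contained in a level-$i$ piece. Inside a single piece, I would match the incident non-tree edges (viewed as a multigraph on pieces) into pairs or short walks that close cycles crossing at most one tree edge of the current level, with the intra-piece tree paths having length $\widetilde{O}(2^i)$. Because each level uses tree paths only of its own scale, the expected load contributed per level is $O(1)$, and summing across $O(\log D)$ levels gives the claimed $\widetilde{O}(D)$ dilation and polylogarithmic congestion budget.

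The hardest step will be converting the per-level \emph{expected} congestion into a \emph{worst-case} bound that holds simultaneously over all edges of $G$. The natural tool is the Lov\'asz Local Lemma applied to the bad events ``edge $e$ is used by more than $\widetilde{O}(1)$ cycles at level $i$,'' whose dependency graph has polynomially bounded degree because each cycle only affects edges within a single piece; alternatively, Chernoff plus a union bound over $O(\log n)$ levels would suffice if pairings at different pieces are independent. A secondary obstacle is handling tree edges close to the root, where $T-e_T$ is very unbalanced and swap edges may be scarce; the hierarchical decomposition is tailored to address this by scheduling such edges only in the outermost levels, but tracking how the $\widetilde{O}(2^i)$-length paths accumulate congestion across scales is the delicate accounting that forces the $\widetilde{O}(1)$ (rather than $O(1)$) factor in the final bound.
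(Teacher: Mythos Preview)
This theorem is not proved in the present paper at all: it is quoted verbatim as a result of \cite{ParterY18} and used here purely as a black box to build private neighborhood trees (Theorem~\ref{thm:neighborhood-cover}). So there is no ``paper's own proof'' to compare your proposal against.

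That said, your sketch is in the right spirit for how low-congestion cycle covers are actually obtained---anchoring on a BFS tree, treating tree and non-tree edges separately, and controlling congestion by pairing non-tree edges rather than taking all fundamental cycles independently. A few points are genuinely underspecified, though. First, ``splicing'' two fundamental cycles does not in general yield a single simple cycle; you need to argue carefully that the paired edges can be combined into one cycle (or a bounded number of cycles) covering both, which typically requires the pairing to be chosen so the tree paths share a common subpath that cancels. Second, your hierarchical decomposition claims that ``each level uses tree paths only of its own scale,'' but a swap edge for a tree edge near the root may have both endpoints deep in the tree, so the cycle it closes has length $\Theta(D)$ regardless of which level you assign it to---you have not explained how the per-level $O(1)$ expected load is actually achieved for such edges. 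Third, the LLL application is hand-waved: the bad events for different edges along a single long cycle are all correlated, so the dependency degree is not obviously polylogarithmic. These are exactly the places where the real construction in \cite{ParterY18} does nontrivial work, and your proposal does not yet contain the ideas that resolve them.
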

To prove \Cref{thm:neighborhood-cover} we show how to use the construction of a 
 $(\dilation,\congestion)$-cycle cover $\cC$ to obtain a $(\dilation \cdot \Delta,\congestion \cdot D \cdot \log \Delta)$ private neighborhood trees $\cN$. Using the construction of $(\widetilde{O}(D),\widetilde{O}(1))$ cycle covers of \Cref{thm:cyclecover} yields the theorem.

Consider a node $u$ with only two neighbors $v_1,v_2$. Then, a cycle cover of 
the graph must cover the edge $(u,v_1)$ by using a cycle containing the node 
$v_2$. Thus, the cycle induces a path between $v_1$ and $v_2$ that does not 
contain $u$. We get a private neighbor tree for $u$, a short path from $v_1$ to 
$v_2$ that does not go through $u$ and has low congestion. We use this idea, 
and show how to generalize it to nodes with an arbitrary degree.

The construction of the private neighborhood trees $\cN$ consists of 
$O(\log \Delta)$ phases. In each phase, we compute a low-congestion
cycle cover in some auxiliary graph using \Cref{thm:cyclecover}. We begin
by each node $u$ holding an empty forest $F_0(u)$ 
consisting only of $u$'s neighbors. We then compute a cycle cover in the graph 
$G$. Let $v_1,\ldots,v_\Delta$ be the neighbors of $u$. Then, the cycles of the 
cycle cover provide short paths between pairs $(v_i,v_j)$ that avoid $u$. We 
add these paths to the forest of $u$. By doing this, we have reduced the number of 
connected components in the forest of $u$ by half. Importantly, since the added
paths are obtained from low-congestion cycle cover, adding these paths for all nodes 
$u$ in the graph still keeps the congestion per edge bounded.

The high-level idea is to repeat this process for $\log \Delta$ iterations 
until $u$'s forest contains one connected component: the output private 
tree $T(u)$ for the node $u$. In order to run the next iteration, we must force 
the cycle cover to find different cycles than the ones it has computed in the 
previous iteration.

Towards that goal, the algorithm uses an auxiliary graph $G'$ defined as follows. First, we add all the nodes in $G$ to $G'$ (but not the edges).
Consider the collection of connected components of a node $u$. We add a virtual node $u_j$ to $G'$ for each of the connected components of $u$, and connect $u_j$ to $u$. Finally, every edge $(u,v)$ in $G$ is replaced by an edge $(u_j,v_i)$ in $G'$ where $v$ is in the $\ith{j}$ connected component of $u$ and $u$ is in the $\ith{i}$ connected component of $v$. See \Cref{fig:trees} for an  illustration.

\begin{figure}[h!]
	\begin{center}
\includegraphics[scale=0.50]{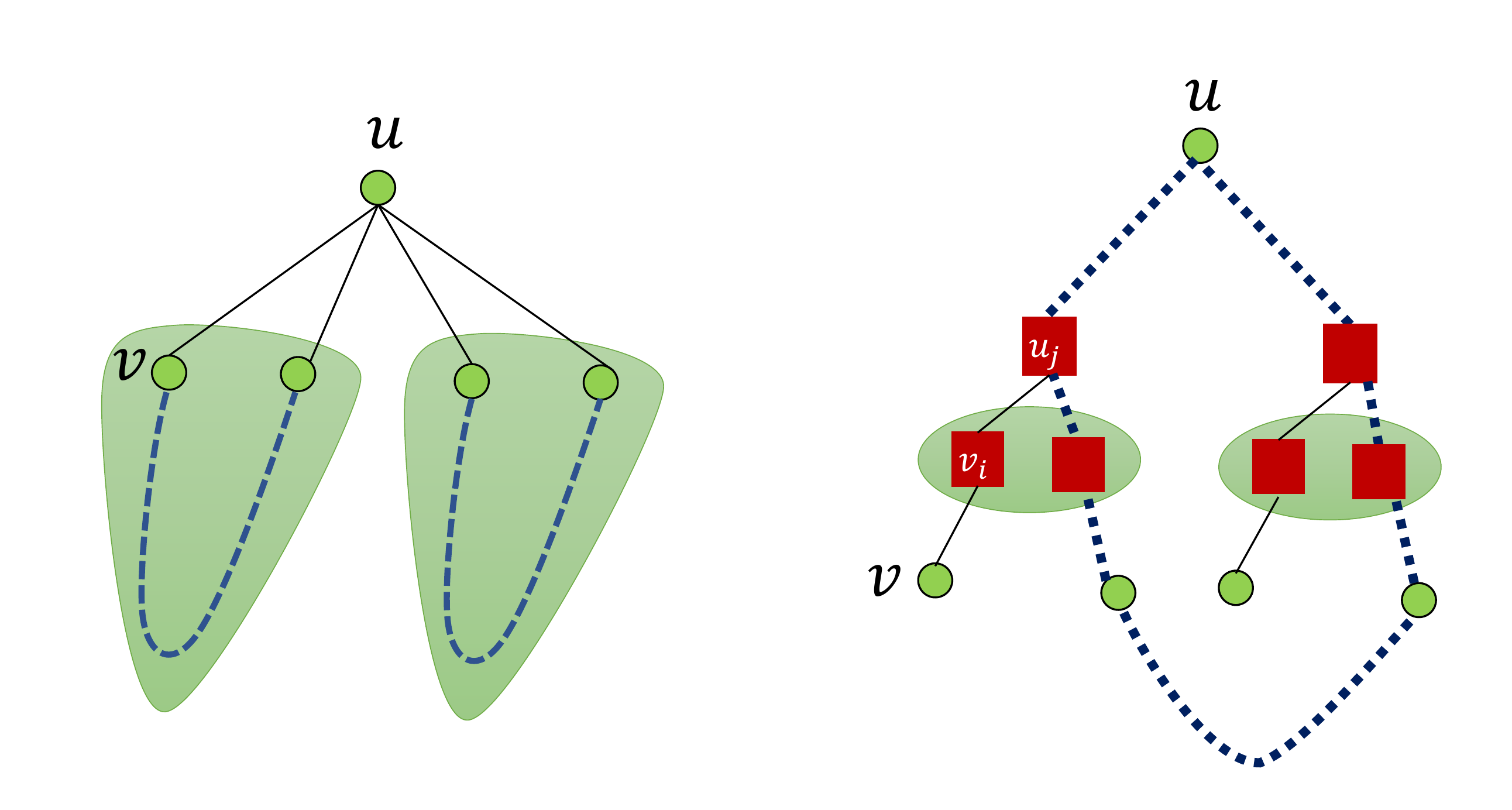}
\caption{Left: Illustration of a node $u$ with two connected components in the original graph $G$. Right: The auxiliary graph $G'$ and a new cycle found by the cycle cover in $G'$ (dashed line). One can observe how this new cycle now connects the two connected components.
	\label{fig:trees}}
	\end{center}
\end{figure}

Now, we run the cycle cover algorithm on $G'$. In the analysis section we show that since the graph $G$ is $2$-vertex connected, the graph $G'$ is also $2$-vertex connected.  A cycle covering an edge $(u,u_j)$ in $G'$ must use the virtual edge $(u,u_i)$ for $i \ne j$. That is, this cycle is used to obtain a path between the $\ith{j}$ connected component and the $\ith{i}$ connected component of $u$ while avoiding $u$. Adding these paths will again reduce the number of connected components by a half. Finally, since these paths are computed in a virtual graph $G'$, we need to translate them into $G$-paths. This is done as follows. An edge $(u,u_j)$ is simply replaced by $u$. An edge $(u_j,v_i)$ is replaced by the edge $(u,v)$ in $G$. We then use the fact that 
the cycles computed in $G'$ have low congestion in $G'$, to show that also the translated paths have low congestion in $G$.

To bound the depth of the private tree $T(u)$, we note that the final tree has 
at most $\Delta$ neighbors of $u$. Each tree path connecting two consecutive 
$u$'s neighbors is obtained from a cycle cover computation and thus have length 
$O(D \log n)$. Overall, this gives a diameter of $O(\Delta D \log n)$. 
The detailed description and analysis of this construction are provided in \Cref{subsec:privaencover}.
\section{Preliminaries}\label{sec:prelims}
Unless stated otherwise, the logarithms in this paper are base 2.
%For a distribution $\cD$ we denote by $x \leftarrow \cD$ an element chosen from $\cD$ uniformly at random.
For an integer $n \in \mathbb{N}$ we denote by
$[n]$ the set $\{1,\ldots, n\}$. We denote by $U_n$ the uniform distribution
over $n$-bit strings. For two distributions (or random variables) $X,Y$ we 
write $X \equiv Y$ if they are identical distributions. That is, for any $x$ it 
holds that $\Pr[X=x]=\Pr[Y=x]$.

\paragraph{Graph Notations.} 
For a tree $T \subseteq G$, let $T(z)$ be the 
subtree of $T$ rooted at $z$. 
%and let $\pi(u,v,T)$ be the tree path between $u$ 
%and $v$, when $T$ is clear from the context, we may omit it and simply write 
%$\pi(u,v)$. 
Let $\Gamma(u,G)$ be the neighbors of $u$ in $G$, and $\deg(u,G)=|\Gamma(u,G)|$. When $G$ is clear from the context, we simply write $\Gamma(u)$ and $\deg(u)$. 
%For a subset of edges 
%$E' \subseteq E(G)$, let $\deg(u,E')=|\{v : (u,v) \in E'\}|$ be the number 
%of edges incident to $u$ in $E'$.
%For a subset of nodes $U$, let $\deg(U,E')=\sum_{u \in U}\deg(u,E')$. For a 
%subset of vertices $S_i\subseteq V(G)$, let $G[S_i]$ be the induced subgraph on 
%$S_i$.
%
%\begin{fact}\label{fc:girthmoore}[Moore Bound, \cite{bollobas2004extremal}]
%Every $n$-vertex graph $G=(V,E)$ with at least $2n^{1+1/k}$ edges has a cycle 
%of length at most $2k$.
%\end{fact}

\subsection{Distributed Algorithms}

\paragraph{The Communication Model.}
We use a standard message passing model, the
\congest\ model \cite{Peleg:2000}, where the execution proceeds in synchronous 
rounds and in each round, each node can send a message of size $O(\log n)$ to 
each of its neighbors.
In this model, local computation at each node is for free and the primary 
complexity measure is the number of communication rounds. Each node holds a 
processor with a unique and arbitrary ID of $O(\log n)$ bits.
Throughout, we make extensive use of the following useful tool, which is based 
on the random delay approach of \cite{leighton1994packet}. 
\begin{theorem}[{\cite[Theorem 1.3]{Ghaffari15}}]\label{thm:delay}
Let $G$ be a graph and let $A_1,\ldots,A_m$ be $m$ distributed algorithms in 
the \congest model, 
where each algorithm takes at most $\dilation$ rounds, and where for each 
edge of $G$, at most $\congestion$ messages need to go through it, in total 
over all these algorithms. Then, there is a randomized distributed
algorithm (using only private randomness) that, with high probability, produces 
a schedule that runs all the algorithms in $O(\congestion +\dilation \cdot \log 
n)$ rounds, after $O(\dilation \log^2 n)$ rounds of pre-computation.
\end{theorem}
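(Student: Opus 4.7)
I would prove this by the classical random-delay technique of \cite{leighton1994packet}, with the twist that each of the $m$ ``packets'' is now an entire distributed algorithm rather than a single message. Draw independent delays $\delta_i$ uniformly from $\{0,1,\ldots,\lfloor \alpha\,\congestion/\log n\rfloor\}$ for each algorithm $A_i$ and a sufficiently small constant $\alpha$, and schedule $A_i$ to be simulated during the real-time rounds $\delta_i{+}1,\ldots,\delta_i{+}\dilation$. The real-time horizon is therefore $O(\congestion/\log n + \dilation)$.

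\textbf{Chernoff step.} Fix an edge $e$ and real-time round $t$. A logical message of some $A_i$ that traverses $e$ at its local step $s$ lands on the pair $(e,t)$ exactly when $\delta_i = t-s$, which happens with probability $O(\log n/\congestion)$ over the uniform choice of $\delta_i$. Since the $m$ algorithms jointly send at most $\congestion$ logical messages across $e$, the expected load on $(e,t)$ is $O(\log n)$; independence of the delays across algorithms together with a Chernoff bound, and a union bound over the $\poly(n)$ relevant pairs $(e,t)$, yield $O(\log n)$ load simultaneously at every $(e,t)$ with probability $1-n^{-\Omega(1)}$. Each real-time round can therefore be simulated in $O(\log n)$ \congest\ rounds by serially forwarding the $O(\log n)$ queued messages on each edge, giving a total round complexity of $O(\congestion+\dilation\log n)$.

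\textbf{Distributed pre-computation.} For the schedule itself to be implementable, the two endpoints of every edge must agree on which of the currently active algorithms may transmit in each micro-slot. Each $A_i$ can have its delay $\delta_i$ chosen by private randomness at the nodes that participate in it (using a common seed broadcast with the $\id$ of $A_i$); the nontrivial part is propagating these $(\id_i,\delta_i)$ pairs through the $\dilation$-depth ``footprint'' of $A_i$ in $G$. This dissemination is itself a dilation-$\dilation$, congestion-$O(\congestion)$ scheduling task, to which one applies a weaker bootstrap version of the same lemma with a larger delay range, paying an additional $\log n$ factor and yielding the $O(\dilation\log^2 n)$ pre-computation term.

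\textbf{Main obstacle.} The delicate step is exactly this bootstrap: in the original LMR framework the packet and its path are known in advance, whereas here $A_i$ is a black-box algorithm, so the set of edges it uses at each logical step cannot be enumerated by a centralized planner. Making the nodes discover, round by round, which of their neighbors are currently transmitting on which $A_i$, while keeping the bootstrap within $O(\dilation\log^2 n)$ rounds and using only private randomness, is precisely where Ghaffari's distributed analysis goes beyond the centralized argument of \cite{leighton1994packet}, and I expect this to be the hardest part to formalize cleanly.
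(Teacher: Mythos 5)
This theorem is not proved in the paper: it is stated as an import, cited verbatim as Theorem~1.3 of \cite{Ghaffari15}, and used as a black-box scheduling primitive. There is therefore no ``paper's own proof'' to compare against.

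As a blind reconstruction, your sketch is faithful to the random-delay argument underlying the cited result. The choice of delay range, the Chernoff step (with the correct observation that for a fixed edge--time pair the per-algorithm load indicators are independent Bernoullis, since all of $A_i$'s traversals of an edge are shifted by the same $\delta_i$ so at most one can land at a given real-time slot, and their expectations sum to $O(\log n)$), and the $O(\log n)$-factor slowdown when serializing each macro-slot into \congest\ rounds are all in line with \cite{Ghaffari15}, which itself adapts the Leighton--Maggs--Rao packet-routing framework from static packets to running algorithms. You also correctly isolate the pre-computation as the genuinely nontrivial part: every node in $A_i$'s $\dilation$-depth footprint must agree on $\delta_i$ using only private randomness, and a naive dissemination of the $(\id_i,\delta_i)$ pairs is itself a scheduling instance of the same shape, creating an apparent circularity. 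Your proposal gestures at ``a weaker bootstrap version'' but does not break the circularity; that resolution is precisely the technical content of Ghaffari's pre-computation phase and would have to be supplied before this could be called a complete proof. For the present paper's purposes, though, the theorem is a cited tool, and your reconstruction is a fair account of the strategy behind it and of where the real work lies.
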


\paragraph{A Distributed Algorithm.}
Consider an $n$-vertex graph $G$ with maximal degree $\Delta$.
We model a distributed algorithm $A$ that works in $r$ rounds as describing 
$r$ functions $f_1,\ldots,f_r$ as follows. Let $u$ be a node in the graph with 
input $x_u$ and neighbors $v_1,\ldots,v_{\Delta}$. At any round $i$, the 
memory of a node $u$ consists of a state, denoted by $\sigma_i$ and $\Delta$ 
messages $m_{v_1 \to u}\ldots, m_{v_\Delta \to u}$ that were received in 
the previous round.

Initially, we set $\sigma_0$ to contain only the input $x_u$ of $u$ and its 
ID and initialize all messages to $\bot$. At round $i$ the node $u$ updates its 
state to $\sigma_{i+1}$ according to its previous state $\sigma_i$ and the 
message from the previous round, and prepares $\Delta$ messages to send $m_{u 
\to v_1}, \ldots, m_{u \to v_{\Delta}}$.
To ease notation (and without loss of 
generality) we assume that each state contains the ID of the node $u$, thus, we 
can focus on a single update function $f_i$ for every round that works for all 
nodes. The function $f_i$ gets the state $\sigma_i$ and messages $m_{v_1 \to 
u}\ldots, m_{v_\Delta \to u}$, and randomness $s_i$ and outputs the next 
state and outgoing message:
\begin{align*}
	(\sigma_{i},m_{u \to v_1}, \ldots, m_{u \to v_\Delta}) \gets 
	f_i(\sigma_{i-1},m_{v_1 \to u},\ldots, m_{v_\Delta \to u}, s_i).
\end{align*}
At the end of the $r$ rounds, each node $u$ has a state $\sigma_r$ and a final 
output of the algorithm. Without loss of generality, we assume that 
$\sigma_r$ is the final output of the algorithm (we can always modify $f_r$ 
accordingly).

\paragraph{Natural Distributed Algorithms.}
We define a family of distributed algorithms which we call {\em natural}, which 
captures almost all known distributed algorithms. A 
natural distributed algorithm has two restrictions for any round $i$: (1) the 
size the state is bounded by $|\sigma_i| \le \Delta \cdot \polylog(n)$, and (2) 
the function $f_i$ is computable in polynomial time. 
The input for $f_i$ is the state $\sigma_i$ and at most $\Delta$ message each 
of length $\log n$. Thus, the input length $m$ for $f_i$ is bounded by $m \le 
\Delta \cdot \polylog(n)$, and the running time should be polynomial in this
input length.

We introduce this family of algorithms mainly for simplifying the presentation of 
our main result. For these algorithms, our main statement can be described with minimal overhead. However, our results are general and 
work for any algorithm, with appropriate dependency on the size of the state 
and the running time the function $f_i$ (i.e., the internal computation time at each node $u$ in round $i$).

\paragraph{Notations.}
We introduce some notations: For an algorithm $\A$, graph $G$, 
input $X=\{x_v\}_{v \in G}$ we denote by $\A_u(G,X)$ the random variable of the 
output of node $u$ while performing algorithm $\A$ on the graph $G$ with inputs 
$X$ (recall that $\A$ might be randomized and thus the output is a random 
variable and not a value). Denote by $\A(G,X)=\{\A_u(G,X)\}_{u \in G}$ the 
collection of outputs (in some canonical ordering). Let $\View_u^{\A}(G,X)$ be a 
random variable of the viewpoint of $u$ in the running of the algorithm $\A$. 
This includes messages sent to $u$, its memory and random coins during all 
rounds of the algorithm.
%For a value $y$ in the support of  $A_u(G,X)$ we denote by 
%$\View_u^{A}(G,X)|y$ the random variable $\View_u^{A}(G,X)$ conditioned on the 
%transcript having final output $y$.

\paragraph{Secure Distributed Computation.}
Let $\A$ be a distributed algorithm. Informally, we say that $\A'$ simulates $\A$ 
 in a secure manner if when running the algorithm $\A'$ 
every node $u$ only learns its final output in $\A$ and ``nothing more''. 
This notion is captured by the existence of a simulator and is defined below.

\begin{definition}[Perfect Privacy]\label{def:perfect-privacy}
Let $\A$ be a distributed (possibly randomized) algorithm, that works in $r$ 
rounds.
We say that an algorithm $\A'$ simulates $\A$ with perfect privacy if for every 
graph $G$, every $u \in G$ and it holds that:
\begin{enumerate}
\item \textbf{Correctness:} For every input $X=\{x_v\}_{v \in V}$: $\A(G,X) 
\equiv \A'(G,X)$.
\item \textbf{Perfect Privacy:} There exists a randomized algorithm (simulator) 
$\Sim$ such that for every input $X=\{x_v\}_{v \in V}$ it 
holds that $$\View_u^{\A'}(G,X) \equiv \Sim(G,x_u,\A_u(G,X)).$$
\end{enumerate}
\end{definition}
This security definition is known as the ``semi-honest'' model, where the 
adversary, acting as one of the nodes in the graph, is not allowed to deviate 
from the prescribed protocol but can run arbitrary computation given all the 
messages it received. Moreover, we assume that the adversary does no collude 
with other nodes in the graph.

\subsection{Cryptography with Perfect Privacy}
One of the main cryptographic tools we use is a specific protocol for secure 
multiparty computation that has perfect privacy.
Feige Kilian and Naor~\cite{FeigeKN94} suggested a model where two players 
having inputs $x$ and $y$ wish to compute a function $f(x,y)$ in a secure 
manner. They achieve this by each sending a single message to a third party 
that is able to compute the output of the function $f$ from these messages but 
learn nothing else about the inputs $x$ and $y$. For the protocol to work, the 
two parties need to share private randomness that is not known to the third 
party. This model was later generalized to 
multi-players and is called the Private Simultaneous Messages Model 
\cite{IshaiK97}, 
which we formally describe next.
\begin{definition}[The $\PSM$ model]\label{def:psm}
Let $f \colon (\bit^m)^k \to \bit^m$ be a $k$ variant 
function. A $\PSM$ protocol for $f$ consists of a pair of algorithms 
$(\PSMEnc,\PSMDec)$ where $\PSMEnc \colon  \bit^m \times \bit^{r} \to \bit^t$ 
and 
$\PSMDec \colon (\bit^t)^k \to \bit^m$ such that
\begin{itemize}
\item For any $X=(x_1,\ldots,x_k)$ it holds that: 
$\Pr_{R \in 
\bit^{r}}[\PSMDec(\PSMEnc(x_1,R),\ldots,\allowbreak
\PSMEnc(x_k,R))\allowbreak=f(x_1,\ldots,x_k)]=1.$
\item There exists a randomized algorithm (simulator) $\Sim$ such that for 
$X=x_1,\ldots,x_k$ and for $R$ sampled from $\bit^{r}$, it holds that
$$ \left\{\PSMEnc(x_i,R) \right\}_{i \in [k]} \equiv \Sim(f(x_1,\ldots,x_k)).$$
\end{itemize}
The communication complexity of the PSM protocol is the encoding length $t$ and 
the randomness complexity of the protocol is defined to be $|R|=r$.
\end{definition}

\begin{theorem}[Follows from \cite{IshaiK97}]\label{thm:psm}
For every function $f:(\bit^m)^k \to \bit^{\ell}$ that is computable by an 
$s=s(m,k)$-space TM there is an efficient perfectly secure $\PSM$ protocol 
whose communication complexity and randomness complexity are $O(km\ell \cdot 
2^{2s})$.
\end{theorem}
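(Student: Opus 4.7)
The plan is to reduce this statement to the original PSM construction of Ishai and Kushilevitz via two standard black-box reductions: one from space-bounded Turing machines to branching programs, and one from multi-bit output to single-bit output.

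First, I would handle the output length. A function $f\colon(\bit^m)^k\to\bit^\ell$ can be written as $\ell$ Boolean functions $f^{(1)},\ldots,f^{(\ell)}$, each computable by the same space-$s$ machine (with a hard-coded index of which output bit to produce). I would run $\ell$ independent copies of a single-output PSM in parallel, each with freshly sampled shared randomness, so that the encoding of $x_i$ is simply the concatenation of the $\ell$ per-bit encodings. Correctness is immediate, and perfect privacy follows by composing the $\ell$ individual simulators: given the $\ell$-bit output $y=f(X)$, the combined simulator runs $\Sim^{(j)}(y_j)$ independently for each $j$ and concatenates. This contributes the $\ell$ factor in the final bound.

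Second, for single-bit output, I would convert the space-$s$ TM for $f$ into a branching program (or matrix branching program) of size $S=2^{O(s)}$ in the standard way: the machine has at most $2^{O(s)}$ possible configurations, and for each input-bit-read step the configuration graph defines a layer with at most $S$ nodes. This yields a branching program of size $S$ computing $f$, after reading each of the $km$ input bits at most polynomially many times (absorbed into the $O(\cdot)$). Then I would invoke the Ishai–Kushilevitz PSM construction for branching programs, which transforms a size-$S$ BP into a perfectly private PSM in which each of the $k$ parties sends $O(m\cdot S^2)$ bits and the shared randomness is of the same order; the server's decoder evaluates a masked matrix product that reveals the correct output bit and whose joint distribution depends only on $f(X)$.

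Combining the two reductions, each party sends $O(m\cdot 2^{2s})$ bits per output bit, for a total of $O(km\ell\cdot 2^{2s})$ across parties and output bits, matching the claimed complexity. The bulk of the work is inherited from \cite{IshaiK97}; the only things to verify here are (i) the polynomial-time efficiency of the BP conversion, which is standard since configurations can be enumerated in time $\poly(S)$, and (ii) the compositional claim that $\ell$ parallel PSMs with independent randomness preserve perfect privacy, which is immediate from the definition. The main conceptual step — the existence of a perfectly-private PSM for branching programs with the stated quadratic-in-$S$ blowup — is the one quoted from \cite{IshaiK97} and is the only non-routine ingredient.
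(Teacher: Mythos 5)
The paper does not actually prove \Cref{thm:psm}: it is imported verbatim from \cite{IshaiK97} (note the ``Follows from'' marker), so there is no in-paper argument to compare your attempt against. Your derivation chain — split the $\ell$-bit output into $\ell$ parallel single-output PSMs with independent randomness, convert the space-$s$ TM into a branching program of width $2^{O(s)}$, and invoke the branching-program PSM of \cite{FeigeKN94,IshaiK97} whose message and randomness lengths scale quadratically with the BP size — is precisely the standard route by which this corollary is extracted from the literature, and the simulator composition for the $\ell$ parallel instances is correct as you state it. Two small points of looseness. First, you claim the BP reads each of the $km$ input bits ``at most polynomially many times''; a space-$s$ machine may run for up to $2^{O(s)}$ steps, so the BP \emph{length} is $2^{O(s)}$, not $\poly(km)$. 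This does not invalidate the argument (the length is absorbed into the exponential-in-$s$ factor), but it means the $km$ factor in the stated bound $O(km\ell\cdot 2^{2s})$ does not arise from ``polynomially many passes'' as your sketch implies; it instead reflects the per-party per-bit encoding cost in the BP-based PSM. Second, you invoke $O(m\cdot S^2)$ bits per party as the guarantee from \cite{IshaiK97} without pinning down exactly which theorem there yields it; since the whole statement is essentially a black-box quotation, the crux of a correct ``proof'' is just a careful citation of the BP-based PSM bound and the TM-to-BP conversion, both of which you identify correctly.
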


We describe two additional tools that we will use, secret sharing and 
one-time-pad encryption.
\begin{definition}[Secret Sharing]\label{def:secret-sharing}
Let $x \in \bit^n$ be a message. We say $x$ is {\em secret shared} to $k$ 
shares by choosing $k$ random strings $x^1,\ldots,x^k \in \bit^n$ conditioned 
on $x = \bigoplus_{j=1}^{k}x^j$. Each $x^j$ is called a share, and notice that 
the joint distribution of any $k-1$ shares is uniform over $(\bit^n)^{k-1}$.
\end{definition}

\begin{definition}[One-Time-Pad Encryption]\label{def:one-time-pad}
Let $x \in \bit^n$ be a message. A one-time pad is an extremely simple 
encryption scheme that has information theoretic security. For a random key $K 
\in \bit^n$ the ``encryption'' of $x$ according to $K$ is $\widehat{x}=x \oplus 
K$. It is easy to see 
that the encrypted message $\widehat{x}$ (without the key) is distributed as a 
uniform random string. To decrypt $\widehat{x}$ using the key $K$ we simply 
compute $x=\widehat{x} \oplus K$. The key $K$ might be referenced as the 
encryption key or decryption key.
\end{definition}
\section{Private Neighborhood Trees}\label{subsec:privaencover}
The graph theoretic basis for our compiler is given by \emph{Private 
Neighborhood Trees}, a decomposition of the graph $G$ into (possibly overlapping) trees 
$T(u_1),\ldots,T(u_n)$ such that each tree $T(u_i)$ contains the 
neighbors of $u_i$ in $G$ but \emph{does not} contain $u_i$. Each tree 
$T(u_i)$ provides the neighbors of $u_i$ a way to communicate privately 
without their root $u_i$. The goal is to compute a collection of 
trees with small overlap and small diameter. Thus, we are 
interested in the existence of a \emph{low-congestion} private neighborhood 
trees. 

\begin{definition}[Private Neighborhood Trees]
Let $G=(V=\{u_1,\ldots, u_n\},E)$ be a $2$-vertex connected graph.
The private neighborhood trees $\cN$ of $G$ is a collection of $n$ subtrees 
$T(u_1), \ldots, T(u_n)$ in $G$ such that for every $i \in \{1,\ldots, n\}$ it 
holds that $\Gamma(u_i)\setminus \{u_i\}\subseteq T(u_i)$, but $u_i \notin 
T(u_i)$. 
A $(\dilation,\congestion)$-private neighborhood trees $\cN$ satisfies:
\begin{enumerate}
\item
$\Diam(T(u_i)) \le \dilation$ for every $i \in \{1,\ldots, n\}$,
\item
Every edge $e \in E$ appears in at most $\congestion$ trees.
\end{enumerate}
\end{definition}
Note that since the graph is $2$-vertex connected, all the neighbors of $u$ are indeed connected in $G\setminus\{u\}$ for every node $u$. The main 
challenge is in showing that all $n$ trees can be both of small diameter and with small overlap.

\begingroup
\def\thetheorem{\ref{thm:neighborhood-cover}}
\begin{theorem}[Private Trees]
For every $2$-vertex connected graph $G$ with maximum degree 
$\Delta$ and diameter $D$, there exists a $(\dilation,\congestion)$-private 
trees with $\dilation=O(D\cdot \Delta \cdot\log n)$ and 
$\congestion=O(D \cdot\log \Delta\cdot \log^3 n)$. 
\end{theorem}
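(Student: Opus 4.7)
The plan is to build the trees $T(u_i)$ iteratively over $\Theta(\log\Delta)$ phases, invoking Theorem~\ref{thm:cyclecover} on a carefully chosen auxiliary graph at every phase and using the resulting short cycles to merge components of a forest maintained at each vertex. For each $u\in V$, initialize $F_0(u)$ as the edgeless forest on the vertex set $\Gamma(u)$; I will maintain the invariant that at the end of phase $i$, the forest $F_i(u)\subseteq G\setminus\{u\}$ has at most $\lceil\deg(u)/2^i\rceil$ connected components, so after $O(\log\Delta)$ phases each $F_i(u)$ consists of a single tree $T(u)$.

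In phase $i$, construct the auxiliary graph $G_i'$ by replacing every vertex $u\in V$ with $k_u$ virtual copies $u_1,\ldots,u_{k_u}$ (one per component of $F_{i-1}(u)$), adding the $k_u$ virtual edges $\{(u,u_j)\}_{j\in [k_u]}$, and replacing every original edge $(u,v)\in E(G)$ by the edge $(u_j,v_\ell)$ where $v$ lies in the $j$-th component of $F_{i-1}(u)$ and $u$ lies in the $\ell$-th component of $F_{i-1}(v)$. A first key substep is showing that $G_i'$ is $2$-vertex connected and has diameter $O(D)$: removing an original node $u$ still leaves the virtual copies $u_j$ connected through paths in $G\setminus\{u\}$ (which exist because $G$ itself is $2$-vertex connected), while removing a single virtual node $u_j$ still leaves $u$ connected to the rest through the edges $(u,u_{j'})$ for $j'\neq j$. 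The diameter bound follows since each $G$-edge is realized by a path of length at most three in $G_i'$. Now apply Theorem~\ref{thm:cyclecover} to $G_i'$ to obtain a cycle cover $\cC_i$ with dilation $\widetilde{O}(D)$ and congestion $\widetilde{O}(1)$.

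Every cycle $C\in\cC_i$ that covers a virtual edge $(u,u_j)$ necessarily enters and exits $u$ through two virtual edges $(u,u_j),(u,u_{j'})$ with $j'\neq j$, since those are $u$'s only $G_i'$-neighbors. Contracting the sub-walk $u_{j'}\to u\to u_j$ on $C$ and then mapping every edge $(x_a,y_b)$ back to the original edge $(x,y)$ yields a $u$-avoiding path $P^u_{j,j'}$ in $G$ between the $j$-th and $j'$-th components of $F_{i-1}(u)$. Adding $P^u_{j,j'}$ to $F_{i-1}(u)$ for every such cycle produces $F_i(u)$; since every virtual edge is covered at least once by $\cC_i$, every component of $F_{i-1}(u)$ is merged with at least one other, establishing the halving invariant. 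For the dilation bound, $T(u)$ contains $\deg(u)\le\Delta$ anchor neighbors linked by at most $\Delta-1$ merging paths, each of length $\widetilde{O}(D)$, so $\Diam(T(u))=O(\Delta\cdot D\cdot\log n)$, as required.

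For the congestion bound, the crucial observation is a two-level accounting: each $G$-edge $(v,w)$ corresponds to a unique $G_i'$-edge per phase and hence sits on $\widetilde{O}(1)$ cycles of $\cC_i$; but each such cycle has length $\widetilde{O}(D)$ and therefore traverses $\widetilde{O}(D)$ virtual edges $(x,x_a)$, causing the cycle to be added to the forest $F_i(x)$ of that many distinct vertices. Hence $(v,w)$ is inserted into at most $\widetilde{O}(D)$ trees per phase, and summing over the $O(\log\Delta)$ phases gives $\congestion = O(D\cdot\log\Delta\cdot\log^3 n)$. The hardest part will be verifying that the $2$-vertex connectivity and the $O(D)$-diameter of the auxiliary graph $G_i'$ are preserved in \emph{every} phase, regardless of how the forests $F_{i-1}(u)$ have evolved; fortunately, both properties depend only on the structural split-operation and the original connectivity of $G$, and an inductive argument should carry them through.
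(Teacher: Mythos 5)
Your proposal is correct and follows essentially the same route as the paper: the same $O(\log\Delta)$-phase construction, the same auxiliary graph with per-component virtual copies, the same halving argument, the same $\Delta\cdot\widetilde{O}(D)$ dilation bound from contracting merging paths to single edges, and the same two-level congestion accounting (low per-edge congestion of the cycle cover multiplied by the $\widetilde{O}(D)$ vertices a single cycle can touch). One small point worth tightening in the write-up: your $2$-vertex-connectivity argument for $G_i'$ breaks in the degenerate case where some node $u$ already has $k_u=1$ components, since then $u$ has a single neighbor $\widetilde{u}_1$ in $G_i'$ and removing $\widetilde{u}_1$ (or even just the edge $(u,\widetilde{u}_1)$, which is a bridge) disconnects $u$; the paper's proof of its Claim glosses over the same corner case, but a clean fix is to identify $u$ with its unique virtual copy when $k_u=1$ so that no bridge or cut vertex is created.
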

\addtocounter{theorem}{-1}
\endgroup
\begin{proof}
The construction of the private neighborhood trees $\cN$ consists of 
$\ell=\log \Delta$ phases. In each phase, we compute an $(O(D\log n),O(\log^3 n))$ 
cycle cover in some auxiliary graph using the cycle cover algorithm in 
\cite{ParterY18}. We begin
by having each node $u$ holding an empty forest $F_0(u)=(\Gamma(u,G),\emptyset)$ 
consisting only of $u$'s neighbors. Then, in each phase we add edges to these 
forests such that the number of connected components (containing the neighbors 
$\Gamma(u,G)$) is reduced by a factor of $2$. After $\ell$ phases, we have a single connected component, that is, we 
have that every $u \in V$ has a tree $T(u)$ in $G\setminus \{u\}$ that spans all neighbors $\Gamma(u,G)$.
Let $\cC_0$ be a cycle cover of $G$. For every phase $t \in \{0,\ldots, 
\ell\}$, let $CC_t(u)$ be the number of connected components in the forest 
$F_t(u)$. Note that $CC_0(u)=\deg(u)$ for all nodes $u$.

In each phase $t \geq 1$, we have a collection of forests 
$\mathcal{N}_{t-1}=\{F_{t-1}(u_1), \ldots, F_{t-1}(u_n)\}$ that satisfy the 
following for every $u_i$:
\begin{enumerate}
	\item $F_{t-1}(u_i) \subseteq G\setminus \{u_i\}$ (the forest avoids $u_i$).
	\item $\Gamma(u_i) \subseteq V(F_{t-1}(u_i))$ (the forest contains all 
	neighbors of $u_i$).
	\item $F_{t-1}(u_i)$ has $CC_{t-1}(u_i)\leq \deg(u_i)/2^{t-1}$ connected 
	components.
\end{enumerate}
It is easy to see that these conditions are met for phase $t=0$, when we have 
the empty forest that simply contains all the neighbors of $u_i$.
The goal of phase $t$ is to add edges to each $F_{t-1}(u_i)$ in order to reduce 
the number of connected components by factor $2$. The algorithm uses the 
current collection of forests $\mathcal{N}_{t-1}$ to define an auxiliary graph 
$\widetilde{G}_t$ which contains the nodes of $G$ and some additional 
``virtual'' nodes and a different set of edges.

For every $u \in V$, we add to $\widetilde{G}_t$ a set of 
$k=CC_{t-1}(u)$ virtual nodes $\widetilde{u}_1,\ldots, \widetilde{u}_k$, one 
for each connected component. We connect $u$ to each of its virtual copies 
$\widetilde{u}_i$ by adding the edges $(u,\widetilde{u}_i)$. Let $(u,v) \in E$ 
be an edge such that $v$ is in the 
$\ith{j}$ connected component of $u$, and $u$ is in the $\ith{i}$ connected 
component of $v$. Then we add the edge $(\widetilde{u_j},\widetilde{v_i})$ to 
the graph $\widetilde{G}_t$.
The graph $\widetilde{G}_t$ has $O(m)$ nodes, $O(m)$ edges and diameter at most $3D$. To see this, any edge $(u,v)$ in the original graph $G$ can be replaced with the path $u \to \widetilde{u}_j \to \widetilde{v}_i \to v$. 

Next, we compute an $(O(D\log n),O(\log^3 n))$-cycle cover $\widetilde{\cC}_t$ 
for the edges of $\widetilde{G}_t$. In the analysis part (Cl. \ref{cl:twovertex}), we show that $\widetilde{G}_t$ is also $2$-vertex connected, and hence all its edges are covered by cycles of $\widetilde{\cC}_t$. 
To map these virtual cycles to real cycles $\cC_t$ in $G$, we simply replace a 
virtual node $\widetilde{u}_j$ with the real node $u$. An edge 
$(u,\widetilde{u}_j)$ will be contracted to just $u$, and edge 
$(\widetilde{u}_j,\widetilde{v}_i)$ will be replaced by $(u,v)$.

Let $G_t(u)$ be the forest $F_{t-1}(u)$ obtained by adding to it all the edges 
of the cycles in $\cC_t$ that intersect $u$, but avoiding the node $u$. That 
is, we define
$$G_t(u) = F_{t-1}(u) \cup \{C \in \cC_t ~\mid~ u\in C\} 
\setminus\{u\}.
$$
Moreover, let $F_t(u) \subseteq G_t(u)$ be a forest that spans all 
the neighbors of $u$. This forest can be computed, for instance, by running a 
BFS from a neighbor of $u$ in each connected component of $G_t(u)$. 
This completes the description of phase $t$. The final private tree collection 
is given by $\cN=\{F_{\ell}(u_1), \ldots, F_{\ell}(u_n)\}$. 
We now turn to analyze this construction and prove 
\Cref{thm:neighborhood-cover}.

\paragraph{Correctness.}
We start by proving each $F_{\ell}(u_i)$ is a tree.
Note that we run the cycle cover algorithm on the graph $\widetilde{G}_t$. 
To obtain a cycle for each edge in $\widetilde{G}_t$ it is required to show that 
$\widetilde{G}_t$ is 2-edge 
connected. We show the stronger property that it is in fact a 2-vertex connected.
\begin{claim}\label{cl:twovertex}
For every phase $t \in [\ell]$, the graph $\widetilde{G}_t$ is 2-vertex 
connected.
\end{claim}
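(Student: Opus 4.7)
The plan is to verify that for every vertex $w \in V(\widetilde{G}_t)$, the graph $\widetilde{G}_t \setminus \{w\}$ is connected. The two main tools are the 2-vertex connectivity of $G$ (so $G \setminus \{u\}$ is connected for every $u \in V$) together with the observation that $\widetilde{G}_t$ is essentially a length-3 edge-subdivision of $G$: each edge $(x,y) \in E(G)$ is represented in $\widetilde{G}_t$ by the path $x \to \widetilde{x}_j \to \widetilde{y}_i \to y$, where $j$ indexes the component of $F_{t-1}(x)$ containing $y$ and $i$ is defined symmetrically. A key structural consequence is that a virtual node $\widetilde{u}_j$ appears on the subdivision of an edge only if that edge is incident to $u$ in $G$.

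First I would dispose of the case $w = u \in V$. For any two original vertices $a, b \neq u$, take a path in $G \setminus \{u\}$ and lift it edge-by-edge: each length-3 replacement visits only the two endpoints and two of their virtual copies, so no internal vertex of the lift equals $u$. Thus $V \setminus \{u\}$ lies in a single component of $\widetilde{G}_t \setminus \{u\}$. Any virtual copy $\widetilde{v}_i$ with $v \neq u$ hangs off $v$ via the edge $(v, \widetilde{v}_i)$; each surviving virtual copy $\widetilde{u}_j$ is connected to the block through its edge to $\widetilde{v}_i$, where $v$ is any neighbor of $u$ in the $j$-th component of $F_{t-1}(u)$ (nonempty since $F_{t-1}(u)$ spans $\Gamma(u)$).

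The case $w = \widetilde{u}_j$ is the crux. Virtual copies are instantiated only for $u$ with $CC_{t-1}(u) \geq 2$ (for $u$ whose forest is already connected there is no further work in phase $t$ and no virtual copy needs to be created), so there is always another copy $\widetilde{u}_{j'}$ with $j' \neq j$ still present. A path in $G \setminus \{u\}$ between original $a, b \neq u$ again lifts to $\widetilde{G}_t \setminus \{\widetilde{u}_j\}$, because $\widetilde{u}_j$ can only appear on subdivisions of edges incident to $u$. To attach $u$ itself to this block, I would choose a neighbor $v$ of $u$ lying in the $j'$-th component of $F_{t-1}(u)$ and use the three-hop path $u \to \widetilde{u}_{j'} \to \widetilde{v}_i \to v$. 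The remaining virtual vertices once more hang off their base original vertex and are thereby included.

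The main obstacle is precisely this second case: it is where the construction exploits the multiplicity of virtual copies of $u$ to ensure that no $\widetilde{u}_j$ can be a cut vertex. Once that is noted, everything reduces to lifting $G \setminus \{u\}$ paths through the subdivision, which is routine. The only bookkeeping subtlety is the regime $CC_{t-1}(u) = 1$, handled by the convention above that no virtual copy is instantiated for such $u$.
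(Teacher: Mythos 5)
Your proof takes the same route as the paper's: regard $\widetilde{G}_t$ as a subdivision of $G$, lift paths of $G \setminus \{u\}$ into $\widetilde{G}_t$ avoiding $u$ and all of $\widetilde{u}_1,\ldots,\widetilde{u}_k$, then hang the remaining virtual vertices (and, when relevant, $u$ itself) onto that lifted skeleton. Where you genuinely add value is the $CC_{t-1}(u)=1$ boundary case. Read literally, the paper's construction creates $k = CC_{t-1}(u)$ virtual copies for \emph{every} $u$; a node $u$ whose forest $F_{t-1}(u)$ already became a tree in an earlier phase therefore receives a single copy $\widetilde{u}_1$, all of $u$'s original edges become incident to $\widetilde{u}_1$ rather than to $u$, and the edge $(u,\widetilde{u}_1)$ is a bridge --- deleting $\widetilde{u}_1$ isolates $u$, so the claim as literally stated would be false. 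The paper's proof does not notice this: when it picks the ``closest real node $w'_1$'' it enumerates $w_1 \in G\setminus\{u\}$, $w_1=\widetilde{v}_i$ with $v\neq u$, and $w_1=\widetilde{u}_i$, but omits $w_1=u$, which is exactly the subcase that breaks when $k=1$. Your convention that no virtual copy is instantiated once $CC_{t-1}(u)=1$ (with $u$ then wired directly into $\widetilde{G}_t$, since its tree needs no further work) is the right repair, and under it your two-case argument ($w=u$ and $w=\widetilde{u}_j$) goes through cleanly. In short: same core argument, but you correctly identify and patch a real hole that the paper's construction and proof leave open.
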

\begin{proof}
Recall that the graph $G$ is 2-vertex connected, and that any edge $(u,v)$ in 
the original graph $G$ can be replaced with the path $u \to \widetilde{u}_j \to \widetilde{v}_i \to v$ in $\widetilde{G}_t$. 

Let $u \in G$ be a (real) node and consider removing a node $u' \in 
\{u,\widetilde{u}_1,\ldots,\widetilde{u}_{k}\}$ from $\widetilde{G}_t$. We will 
show that the graph $\widetilde{G}_t \setminus \{u'\}$ 
remains connected. Let $w_1,w_2$ be two nodes 
in $\widetilde{G}_t \setminus \{u'\}$. 
Let $w'_1$ be the closest node to $w_1$ in $\widetilde{G}_t \setminus \{u'\}$ 
such that $w'_1 \in G \setminus{u}$ (it might be that $w_1=w'_1$).
We observe that such a node $w'_1$ 
always exists: if $w_1 \in G \setminus \{u\}$ then $w'_1=w_1$, if 
$w_1=\widetilde{v}_i$ for some $i$ and $v \ne u$ then $w'_1=v$ and if 
$w_1=\widetilde{u}_i$ for some $i$ then there exists $v \ne u$ such that $(v,u) 
\in G$ and therefore there exists $i'$ such that 
$\widetilde{v}_{i'}$ is connected to $\widetilde{u}_i$ and then $w'_1=v$.
Similarly, 
let $w'_2$ be the closest node to $w_2$ in $\widetilde{G}_t \setminus \{u'\}$ 
such that $w'_2 \in G \setminus \{u\}$ (it might be that $w'_1=w'_2$).

Since 
$G$ is 2-vertex connected, then there is a path $P$ in $G$ between $w'_1$ and 
$w'_2$ that 
does not use $u$. Therefore, there is a path $\widetilde{P}$ between $w'_1$ and 
$w'_2$ in $\widetilde{G}_t$ that does not use $u$ or $\widetilde{u}_j$ for any $j \in [k]$. 
Thus, there is a path in $\widetilde{G}_t$ from $w_1$ to $w_2$ as follows: 
$w_1,\ldots,w'_1,\widetilde{P},w'_2,\ldots,w_2$. Note that it might be the case 
that this path is 
not simple (\eg if $\widetilde{P}$ uses $w_1$) but it can be made simple by omitting internal cycles.
\end{proof}

Next, we show that indeed the final forests are trees.
\begin{claim}\label{lemma:private-n-covering}
For every phase $t \in [\ell]$ and for every $u \in V$ the number 
of connected components satisfies $CC_t(u) \leq \Delta/2^{t}$.
\end{claim}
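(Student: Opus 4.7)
The plan is to prove the claim by induction on $t$. The base case $t=0$ holds because $F_0(u)$ consists of the $\deg(u)\le \Delta$ isolated neighbors of $u$, so $CC_0(u)=\deg(u)\le \Delta/2^0$. For the inductive step, assuming $CC_{t-1}(u)\le \Delta/2^{t-1}$, I will show the stronger halving statement $CC_t(u)\le CC_{t-1}(u)/2$, which immediately gives the bound.

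The key structural observation is that in the auxiliary graph $\widetilde{G}_t$, the only edges incident to a real node $u$ are the $k=CC_{t-1}(u)$ virtual edges $(u,\widetilde{u}_1),\ldots,(u,\widetilde{u}_k)$. By Claim~\ref{cl:twovertex}, $\widetilde{G}_t$ is $2$-vertex connected, so the cycle cover $\widetilde{\cC}_t$ produced by Theorem~\ref{thm:cyclecover} covers each such virtual edge by a simple cycle. Any simple cycle through $(u,\widetilde{u}_j)$ must leave $u$ through a different virtual edge $(u,\widetilde{u}_{j'})$ with $j'\ne j$; removing $u$ from this cycle yields a path from $\widetilde{u}_j$ to $\widetilde{u}_{j'}$ in $\widetilde{G}_t$ that avoids $u$. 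Translating this path back to $G$ (replacing each $\widetilde{v}_i$ by $v$, and the endpoint edges $(u,\widetilde{u}_j),(u,\widetilde{u}_{j'})$ by their real neighbors of $u$ in components $j$ and $j'$, respectively) produces a $u$-avoiding path in $G$ whose endpoints lie in two distinct connected components of $F_{t-1}(u)$. This is exactly the kind of edge set that is added when forming $G_t(u)$, and hence $F_t(u)$.

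From here I build an auxiliary multigraph $H_u$ on the vertex set $[k]$ of components of $F_{t-1}(u)$: for each virtual edge $(u,\widetilde{u}_j)$ and the cycle of $\widetilde{\cC}_t$ covering it, I add an edge in $H_u$ between $j$ and the index $j'$ of the other virtual copy of $u$ on that cycle. By construction, every vertex $j\in[k]$ has degree at least $1$ in $H_u$. Consequently, every connected component of $H_u$ contains at least two vertices, so $H_u$ has at most $k/2$ connected components. Finally, two components $j,j'$ of $F_{t-1}(u)$ that are adjacent in $H_u$ become merged in $F_t(u)$ (because the corresponding $u$-avoiding path in $G$ is included in $G_t(u)$, and the spanning forest $F_t(u)$ preserves the connectivity of $G_t(u)$ restricted to $\Gamma(u)$). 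Hence $CC_t(u)$ is at most the number of connected components of $H_u$, giving $CC_t(u)\le k/2 = CC_{t-1}(u)/2 \le \Delta/2^t$.

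The only subtlety I anticipate is the translation step from virtual paths in $\widetilde{G}_t$ to real $u$-avoiding paths in $G$: I must check that the mapping indeed produces a walk in $G\setminus\{u\}$ that connects a neighbor of $u$ in component $j$ to a neighbor in component $j'$, and that the definitions of $G_t(u)$ and $F_t(u)$ truly inherit this merging. Both facts follow directly from the definition of $\widetilde{G}_t$ (whose edges are either $(u,\widetilde{u}_i)$ or correspond one-to-one with edges of $G$) and from the fact that $F_t(u)$ is chosen to be a spanning forest of $G_t(u)$, so no component information is lost.
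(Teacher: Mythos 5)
Your proof follows the same inductive halving argument as the paper, and the auxiliary multigraph $H_u$ is a nice way to make explicit the pairing argument that the paper states only informally (``every cycle through a virtual edge pairs two virtual copies of $u$, so components merge in pairs''). One detail, however, is not quite right and is worth fixing: the virtual path from $\widetilde{u}_j$ to $\widetilde{u}_{j'}$ obtained by removing $u$ from the cycle may still pass through another virtual copy $\widetilde{u}_{j''}$ of $u$ (a simple cycle visits $u$ itself only once, but nothing prevents it from visiting further virtual copies $\widetilde{u}_{j''}$ without going through $u$). After the translation $\widetilde{u}_{j''}\mapsto u$ and the $\setminus\{u\}$ step in the definition of $G_t(u)$, this object is therefore \emph{not} a single $u$-avoiding path; it breaks into segments at every such visit. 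The claim still holds because the two cycle-neighbors $\widetilde{w}_a,\widetilde{w'}_b$ of $\widetilde{u}_{j''}$ translate to neighbors $w,w'$ of $u$ that both lie in component $j''$ of $F_{t-1}(u)$, and $F_{t-1}(u)\subseteq G_t(u)$ supplies the missing connection; so $j$ and $j'$ do get merged (via $j''$), and your $H_u$ bookkeeping remains valid. Your final ``subtlety'' paragraph should be reworded accordingly: the merging relies on the inclusion $F_{t-1}(u)\subseteq G_t(u)$, not only on the cycle edges producing a literal $u$-avoiding path.
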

\begin{proof}
The lemma is shown by induction on $t$. The case of $t=0$ holds vacuously. 
Assume that the claim holds up to $t-1$ and consider phase $t$. 
By construction, for each $u$, the auxiliary graph $\widetilde{G}_t$ contains
$CC_{t-1}(u)$ virtual nodes $\widetilde{u}_j$ that are connected to $u$.

The cycle cover $\widetilde{\cC}_t$ for $\widetilde{G}_t$ covers all these 
virtual edges $(u,\widetilde{u}_j)$ by virtual cycles, each such cycle connects 
two virtual nodes. Since every two virtual nodes of $u$ in $\widetilde{G}_t$ 
are connected to neighbors of $u$ that belong to different components in 
$G_{t-1}(u)$, every cycle that connects two virtual neighbors is mapped into a 
cycle that connects two of $u$'s neighbors that belong to a different connected 
component in phase $G_{t-1}(u)$. Hence, the number of connected components in 
the forest $F_t(u)$ has been decreased by factor at least $2$ compared to that 
of $F_{t-1}(u)$. 
\end{proof}

\paragraph{Small Diameter.}
We show that the diameter of each tree $T(u_i)$ is bounded by $O(\Delta D\cdot 
\log n)$. Note that this bound is existentially tight (up to logarithmic 
factors) as there are graphs $G$ with diameter $D$ and a node $u$ with degree 
$\Delta$ such that the diameter of $G\setminus \{u\}$ is $O(\Delta D)$.  

\begin{claim}
The diameter of each tree $T(u_i) \in \cN$ is $O(\Delta \cdot D\cdot \log n)$.  
\end{claim}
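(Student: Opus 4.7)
My plan is to bound the diameter of $T(u) = F_\ell(u)$ by first bounding the total number of vertices in $T(u)$, then invoking the elementary fact that a tree on $N$ vertices has diameter at most $N-1$. The key quantities to control are, per phase $t$: (i) the number of cycles from $\widetilde{\cC}_t$ that pass through $u$, and (ii) the length of each such cycle after mapping back to $G$.

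For (i), I would observe that in the auxiliary graph $\widetilde{G}_t$, the node $u$ is adjacent only to its $CC_{t-1}(u)$ virtual copies $\widetilde{u}_1,\ldots,\widetilde{u}_{CC_{t-1}(u)}$. By the congestion guarantee of the cycle cover (Theorem~\ref{thm:cyclecover}), each of these edges lies in $\widetilde{O}(1)$ cycles of $\widetilde{\cC}_t$, so at most $\widetilde{O}(CC_{t-1}(u))$ cycles contain $u$. Combining with Claim~\ref{lemma:private-n-covering} gives the sharper bound $\widetilde{O}(\Delta/2^{t-1})$. For (ii), each cycle of $\widetilde{\cC}_t$ has length $O(D\log n)$ in $\widetilde{G}_t$; the mapping back to $G$ only contracts edges of the form $(u,\widetilde{u}_j)$, so the resulting cycle in $G$ has length at most $O(D\log n)$ as well. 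In particular, after deleting $u$, each added path contributes at most $O(D\log n)$ new vertices to the forest $F_t(u)$.

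Summing over the $\ell = \log \Delta$ phases yields
\begin{equation*}
|V(T(u))| \;\le\; \deg(u) \;+\; \sum_{t=1}^{\log\Delta} \widetilde{O}\!\left(\frac{\Delta}{2^{t-1}}\right)\cdot O(D\log n) \;=\; O(\Delta \cdot D\cdot \log n),
\end{equation*}
where the geometric series collapses to $O(\Delta)$ and the polylogarithmic congestion factors are absorbed into the stated bound. Since $T(u)$ is a tree, its diameter is at most $|V(T(u))|-1 = O(\Delta\cdot D\cdot \log n)$, as desired.

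The main subtlety I expect is making the counting as tight as the statement wants: a naive count that sums all vertices of every cycle intersecting $u$ carries an extra factor of the cycle cover's congestion $\widetilde{O}(1)$. The cleanest way to handle this is to exploit that $F_t(u)$ is a \emph{spanning forest} of $G_t(u)$, so redundant cycle-edges not needed to merge components are discarded; this keeps the bookkeeping of new vertices in line with the number of components being merged, $CC_{t-1}(u) - CC_t(u)$, rather than the number of cycles through $u$. The rest of the argument is a routine geometric summation.
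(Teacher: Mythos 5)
Your route --- bound $|V(T(u))|$ and then invoke $\Diam(T(u))\le|V(T(u))|-1$ --- is genuinely different from the paper's. The paper contracts each $O(D\log n)$-length path joining two neighbors of $u$ to a single abstract edge, observes that after $\log\Delta$ phases these abstract edges form a connected tree on the $\deg(u)$ neighbors of abstract diameter at most $\deg(u)-1$, and multiplies by the per-edge path length; it never needs a bound on the number of vertices in $T(u)$.

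Your displayed count does not, as written, yield the claimed bound. The $\widetilde{O}(\cdot)$ in the per-phase term hides the cycle cover's congestion, which is $O(\log^3 n)$ in this construction, so after the geometric sum you obtain $O(\Delta D\log^4 n)$, not $O(\Delta D\log n)$; the parenthetical remark that the polylogarithmic congestion factors are ``absorbed into the stated bound'' is incorrect because the claim allows only a single factor of $\log n$. You anticipate this, but the fix you propose is phrased in a way that does not work: a \emph{spanning} forest of $G_t(u)$ contains, by definition, every vertex of $G_t(u)$, including the vertices of cycles through $u$ that merge nothing. Discarding ``redundant cycle-edges'' removes edges, not the extra vertices you need to exclude from the count. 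The correct repair is to take $F_t(u)$ to be a \emph{Steiner} forest spanning only $\Gamma(u)$ (which the paper's phrase ``a forest that spans all the neighbors of $u$'' permits), retaining within each phase only one subpath per actual component merge; the number of retained subpaths then telescopes over all phases to $CC_0(u)-CC_\ell(u)=\deg(u)-1$, each contributing $O(D\log n)$ vertices, giving $|V(T(u))|=O(\deg(u)\cdot D\log n)$ directly. This is the same counting the paper's contraction argument performs implicitly. Once you adopt this fix, the per-phase geometric series in your display is superseded and should be deleted, since the bound comes from the telescoping merge count rather than from summing $\widetilde{O}(\Delta/2^{t-1})$ over $t$.
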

\begin{proof}
%We first claim that the diameter of each component in the forest $F_i(u)$ is bounded by $O(\Delta \cdot D\cdot \log n)$ for every $u \in V$ and every $i \in \{1,\ldots,\ell\}$. 
Note that the forest $F_t(u)$ is formed by a collection of $O(D\log 
n)$-length cycles that connect $u$'s neighbors. Hence, when removing $u$, we 
get paths of length $O(D\log n)$. Consider the process where in each phase $t$, 
every two $u$'s-neighbors that are connected by a cycle in $\cC_t$ are 
connected by a single ``edge''. By the Proof of 
\Cref{lemma:private-n-covering}, after $\log \Delta$ phases, we get a 
connected tree with $\deg(u)$ nodes, and hence of ``diameter" $\deg(u)$. 
Since each edge corresponds to a path of length $O(D\log n)$ in $G$, we get 
that the final diameter of $F_{\ell}(u)$ is $O(\deg(u)\cdot D \cdot \log n)$. 
\end{proof}

\paragraph{Congestion.} We analyze the congestion of the construction.
\begin{claim}\label{cl:pneighbor-congestion}
Each edge $e$ appears on at most $O(D\log^3 n \log \Delta)$ different trees $T(u_i)\in \cN$.
\end{claim}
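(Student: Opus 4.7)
The plan is to bound, for a fixed edge $e=(x,y)\in E$, the number of trees $T(u_i)\in \cN$ that contain $e$ by decomposing the count across the $\ell=\log \Delta$ phases of the construction. In any phase $t$, the edge $e$ can only be added to $F_t(u)$ (and therefore to $T(u)$) if some cycle $C\in \cC_t$ simultaneously contains both $e$ and $u$. Since $F_t(u)$ is extracted as a spanning subforest of the edges contributed by these cycles, it suffices to upper-bound, for each phase $t$, the number of nodes $u$ that see $e$ inside a cycle of $\cC_t$.

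The first step is to identify, for each phase $t$, the unique virtual edge $\widetilde{e}_t$ in $\widetilde{G}_t$ that maps to $e$: if $y$ lies in the $j$th component of $F_{t-1}(x)$ and $x$ lies in the $i$th component of $F_{t-1}(y)$, then $\widetilde{e}_t = (\widetilde{x}_j,\widetilde{y}_i)$. The key observation is that the only way a cycle of $\cC_t$ (obtained by contracting virtual edges and unmapping real edges of some $\widetilde{C}\in\widetilde{\cC}_t$) can contain $e$ is if $\widetilde{C}$ contains $\widetilde{e}_t$; virtual edges of the form $(u,\widetilde{u}_i)$ contract to a single vertex and never contribute an edge to $\cC_t$.

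Next I invoke the two quantitative properties of the cycle cover computed via \Cref{thm:cyclecover} in phase $t$: the congestion bound tells us that at most $O(\log^3 n)$ cycles of $\widetilde{\cC}_t$ pass through $\widetilde{e}_t$, and the dilation bound tells us that each such cycle has length at most $O(D\log n)$ in $\widetilde{G}_t$, and hence its unmapped image in $G$ traverses at most $O(D\log n)$ real nodes. For every such image cycle $C$, the edge $e$ can be added to $F_t(u)$ only for $u\in V(C)\setminus \{x,y\}$. Multiplying, the number of trees receiving $e$ in phase $t$ is at most $O(D\log^4 n)$, and summing over the $\log \Delta$ phases (and absorbing a logarithmic factor into the $\widetilde{O}$ bookkeeping) gives the target bound $O(D\log^3 n\,\log\Delta)$ stated in the claim.

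The main obstacle is the correspondence between edges of $G$ and edges of $\widetilde{G}_t$: one must argue carefully that $e$ has exactly one preimage in $\widetilde{G}_t$ (so the congestion bound can be applied without multiplicity) and that the only cycles of $\cC_t$ that can contain $e$ come from cycles of $\widetilde{\cC}_t$ through that preimage. The step where $F_t(u)$ is thinned down to a spanning forest of $G_t(u)$ is harmless for an upper bound, since taking a subforest only removes edges; hence the counting argument above upper-bounds the congestion of the final collection $\cN$.
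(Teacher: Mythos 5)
Your proposal is correct and follows essentially the same route as the paper's proof: bound, per phase, the number of nodes $u$ that can acquire $e$ in $G_t(u)$ by (congestion of $\widetilde{\cC}_t$ at the unique preimage of $e$) times (cycle length), then sum over the $\log\Delta$ phases; the thinning to a spanning forest is handled the same way. Your bookkeeping is actually a bit more careful than the paper's, since you use the true $O(D\log n)$ cycle length rather than $D$ and hence arrive at $O(D\log^4 n\log\Delta)$, whereas the paper silently drops that extra $\log n$ in stating $O(D\log^3 n\log\Delta)$; both are subsumed by the $\widetilde{O}(D\log\Delta)$ claim in \Cref{thm:neighborhood-cover}.
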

\begin{proof}
We first show that the cycles $\cC_t$ computed in $G$ have congestion $O(\log^3 
n)$ for every $t \in \{1,\ldots,\ell\}$. Clearly, the cycles 
$\widetilde{\cC}_t$ computed in $\widetilde{G}_t$ have congestion of $O(\log^3 
n)$. Consider the mapping of cycles  $\widetilde{\cC}_t$ in $\widetilde{G}_t$ 
to a cycles $\cC_t$ in $G$. 
Edges of the type $(u,\widetilde{u}_j)$ are replaced by $(u,u)$ and hence there is no real edge in the cycle.
Edges of the type $(\widetilde{u}_j,\widetilde{v}_i)$ are replaced by $(u,v)$. Since there is 
only one virtual node of $u$ that connects to $v$, and since  
$(\widetilde{u}_j,\widetilde{v}_i)$ appears in $O(\log^3 n)$ many cycles, also $(u,v)$ 
appears in $O(\log^3 n)$ many cycles (i.e., this conversion does
not increase the congestion).

Note that the cycle $C$ of each edge $(u,v)$ joins the $G_t$ subgraphs of at most $D$ nodes since in our construction a cycle $C$ might cover up to $D$ edges. In addition, each edge $e'$ appears on different cycles in $\cC_t$.

We now claim that each edge $e$ appears on $O(t \log^3 n \cdot D)$ graphs $G_t(u)$. 
For $t=1$, this holds as the cycle $C$ of an edge $(u,v)$ joins the subgraphs 
$G_1(x)$ and $G_1(y)$ for every edge $(x,y)$ that is covered by $C$. Assume it 
holds up to $t-1$ and consider phase $t$. In phase $t$, we add to the $G_t(u)$ 
graphs the edges of $\cC_t$. 
Again, each cycle $C'$ of an edge $(u,v)$ joins $D$ graphs $G_t(x), G_t(y)$ for every $(x,y)$ that is covered by $C'$. Hence each edge $e$ appears on $O(D\cdot\log^3 n)$ of the subgraphs 
$G_t(u_j)\setminus G_{t-1}(u_j)$. By induction assumption, each $e$ appears on 
$(t-1)\log^3 n \cdot D$ graphs $G_{t-1}(u_j)$ and hence overall each edge $e$ appears 
on $O(t\log^3 n)$ graphs $G_{t}(u_j)$.
Therefore we get that each edge appears on $O(\log \Delta \cdot \log^3 n \cdot D)$ 
trees in $\cN$.
\end{proof}
The above proof actually shows a slightly stronger statement: a construction of $(\dilation,\congestion)$-cycle covers yields a $(\dilation', \congestion')$ private neighborhood trees for $\dilation'=O(\dilation \cdot \Delta)$ and $\congestion'=O(\congestion \cdot \dilation \cdot \log \Delta)$.
\end{proof} % $\cN$. In particular, the construction of $(D\log n,\log^3 n)$ cycle cover $\cC$ yields the desired bound.
The distributed construction of private neighborhood trees is in \Cref{sec:preproc}. 
\section{Secure Simulation via Private Neighborhood Trees}\label{sec:simulation}
In this section, we describe how to transform any distributed algorithm $\cA$ 
to 
a new algorithm $\cA'$ which has the same functionality as $\cA$ (\ie the 
output 
for every node $u$ in $\cA$ is the same as in $\cA'$) but has perfect 
privacy (as is defined in \Cref{def:perfect-privacy}). Towards this end, we 
assume that the combinatorial structures required 
are already computed (in a preprocessing stage described in \Cref{sec:preproc}), namely, a private neighborhood 
tree in the graph. The output of the preprocessing stage is given in a 
distributed manner. The (distributed) output of the 
private neighborhood trees for each node $u$, is such that each vertex $v$ 
knows its parent in the private neighborhood tree of $u$ (if such exists). 

%I moved the below to an earlier part
%Throughout this section, we will have an $n$-vertex 
%graph $G$ and use the notation $\widetilde{O}$ is the ``Big O'' notation that 
%hides $\mathsf{polylog}(n)$ factors.
\begingroup
\def\thetheorem{\ref{thm:secure-algorithm}}
\begin{theorem}
Let $G$ be an $n$-vertex graph with diameter $D$ and maximal degree $\Delta$. 
Let $\A$ be a natural distributed algorithm that works on $G$ in $r$ rounds. 
Then, 
$\A$ can be transformed to a new algorithm $\A'$ with the same output 
distribution and which has perfect 
privacy and runs in $\widetilde{O}(rD \cdot \poly(\Delta))$ rounds (after a 
preprocessing 
stage).
\end{theorem}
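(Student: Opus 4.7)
The plan is to implement the scheme sketched in \Cref{fig:alg-A_u-intro} in full detail and then verify both correctness (matching output distribution) and perfect privacy (existence of a simulator) for every node. I would first fix, once and for all, the following invariant for the compiled algorithm $\A'$: at the beginning of each simulated round $i$, each node $u$ stores an encrypted state $\widehat{\sigma}_i = \sigma_i \oplus R_{\sigma_i}$ under a one-time pad key $R_{\sigma_i}$ held by a distinguished neighbor $v(u)$ of $u$, and for each neighbor $v_j$ of $u$ the encrypted incoming message $\widehat{m}_{v_j \to u}$ is stored at $u$ under a key $R_{v_j\to u}$ held by $v_j$. In parallel, $u$ itself holds the keys needed to decrypt the outgoing messages that it ``sent'' to its neighbors. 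Starting from the initial state, where $x_u$ is secret-shared among $u$'s neighbors and $u$ holds the pad, this invariant is easy to establish round zero.

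The core of each round is to realize the next-state function $f_i$ without ever exposing $\sigma_i$ or the cleartext messages. For this I would define, as in the introduction, the augmented function $f'_i$ that takes as input the encrypted state, the encrypted incoming messages, all the associated decryption keys, and a freshly sampled set of output keys; internally $f'_i$ XORs to recover the cleartexts, evaluates $f_i$, and re-encrypts under the new keys. Then I would invoke \Cref{thm:psm} with $u$ playing the role of the PSM server and its neighbors playing the clients: each client sends a single PSM message to $u$; $u$ applies $\PSMDec$ and learns exactly $\widehat{\sigma}_{i+1}$ and the encrypted outgoing messages, while by the simulator guarantee of PSM nothing else about the inputs is leaked. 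Because $\A$ is natural, the function $f'_i$ has input length $\widetilde{O}(\Delta)$ and is poly-time computable. A naive single PSM call would blow up to $2^{\widetilde{O}(\Delta)}$ communication, so I would reduce $f'_i$ to a polynomial-size boolean circuit and run a PSM per gate (constant input size, hence $O(1)$ communication per gate), wiring intermediate wires by one-time-pad re-sharing; this chains the per-gate PSMs while preserving privacy and bringing the per-round cost down to $\poly(\Delta,\log n)$ PSM instances of logarithmic size.

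The remaining piece — and the place where the graph theoretic infrastructure enters — is supplying the shared randomness required by every PSM instance among the neighbors of $u$, in a way that is information-theoretically hidden from $u$. I would use the $(\dilation, \congestion)$-private neighborhood trees of \Cref{thm:neighborhood-cover} with $\dilation = \widetilde{O}(\Delta D)$ and $\congestion = \widetilde{O}(D)$: within the tree $T(u)$, the designated neighbor $v(u)$ chooses fresh randomness and broadcasts it downward along $T(u)$ so that every neighbor of $u$ receives the same string in $O(\dilation)$ rounds, while $u$ itself sees nothing since $u \notin T(u)$. The new encryption keys are generated and distributed the same way. Executing this broadcast concurrently across all $n$ trees is where congestion matters: by \Cref{thm:delay} (random delay), all broadcasts jointly finish in $\widetilde{O}(\dilation + \congestion) = \widetilde{O}(\Delta D)$ rounds, and each round of the original algorithm $\A$ costs $\widetilde{O}(\Delta D \cdot \poly(\Delta)) = \widetilde{O}(D \cdot \poly(\Delta))$ rounds in $\A'$, giving the claimed $\widetilde{O}(rD\cdot \poly(\Delta))$ total.

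Finally, I would verify the two requirements of \Cref{def:perfect-privacy}. Correctness follows by induction on $i$: if the invariant holds before round $i$, the correctness of PSM and of one-time-pad decryption imply that $\widehat{\sigma}_{i+1}$ decodes to $f_i$ applied to the correct cleartexts, and after round $r$ the neighbor $v(u)$ releases $R_{\sigma_r}$ so that $u$ recovers $\sigma_r$, which by assumption is $\A_u(G,X)$. For perfect privacy, I would construct a simulator $\Sim(G,x_u,\A_u(G,X))$ that, round by round, produces $u$'s view: the encrypted state and encrypted incoming messages are distributed as uniform strings (one-time-pad with fresh keys unknown to $u$), so they can be simulated by uniform bits; the PSM messages received by $u$ are simulated by the PSM simulator of \Cref{thm:psm} applied to $\widehat{\sigma}_{i+1}$ and the encrypted outgoing messages (themselves uniform); and the final key released to $u$ is simulated as $\widehat{\sigma}_r \oplus \A_u(G,X)$. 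The main subtlety — and what I expect to be the hardest part of the write-up — is a clean hybrid argument that stitches these per-round simulators together while accounting for the fact that $u$'s own contributions (its secret-shared input and the messages it authored) must be generated consistently throughout; here I would rely on the fact that $u$'s key shares are independent of the neighbors' shares and that the random tapes used in different trees $T(w)$ are mutually independent, so the view decomposes into independently simulatable pieces.
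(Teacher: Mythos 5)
Your overall architecture matches the paper's compiler: PSM with $u$ as the server and its neighbors as clients, private neighborhood trees to distribute the clients' shared randomness while hiding it from $u$, random delay to run all trees simultaneously, and a gate-by-gate reduction to keep the PSM instances small (the paper packages the latter as \Cref{clm:log-space} and then invokes \Cref{thm:psm} once per transformed round, which is essentially your ``PSM per gate with one-time-pad re-sharing'' phrased at the level of the transformed algorithm).

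However, the key-placement invariant you fix at the outset is inverted in a way that breaks perfect privacy. You state that ``$u$ itself holds the keys needed to decrypt the outgoing messages that it `sent' to its neighbors,'' i.e.\ that $u$ holds $R_{u\to v_j}$, and symmetrically that $R_{v_j\to u}$ is held by $v_j$. But $\widehat{m}_{u\to v_j}$ is the output of $u$'s own PSM (you say $u$ ``learns exactly $\widehat{\sigma}_{i+1}$ and the encrypted outgoing messages''), so $u$ holds both the ciphertext and the key, hence can recover $m_{u\to v_j}$, a function of $u$'s hidden intermediate state $\sigma_{i-1}$ that no simulator given only $x_u$ and $\A_u(G,X)$ can reproduce. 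The same leak occurs at $v_j$, which as server of its own PSM learns $\widehat{m}_{v_j\to u}$ while by your invariant it also holds $R_{v_j\to u}$. The paper's allocation is the opposite: $R_{u\to v_j}$ is chosen and held by $v_j$ and enters $u$'s PSM only through $\PSMEnc$, so ciphertext and key never co-reside outside the PSM black box; you need that flip. Finally, the ``clean hybrid argument'' you defer is handled in the paper not by a new hybrid but by appealing to composability of perfectly secure protocols with black-box non-rewinding simulators \cite{KushilevitzLR10}; you should cite or re-establish that result rather than leave it as an open subtlety.
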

\addtocounter{theorem}{-1}
\endgroup

%\subsection{The Simulation Algorithm}
%We construct a general \emph{round-by-round} simulating algorithm.
%The secure simulation works \emph{round-by-round}. 
As a preparation for our secure simulation, we provide the following convenient 
view of distributed algorithms.  

\subsection{Our Framework}
We treat the distributed 
$r$-round algorithm $\cA$ from the viewpoint of some fixed node $u$, as a 
collection of $r$ functions $f_1,\ldots,f_r$ as 
follows. Let
$\Gamma(u)=\{v_1,\ldots,v_{k}\}$. At any round $i$, the memory of $u$ consists 
of a 
state, denoted by $\sigma_i$ and $\Delta$ messages $m_{v_1 \to u}\ldots, 
m_{v_\Delta \to u}$ that were received in the previous round (in the degree 
of 
the node is less than $\Delta$ the rest of the messages are empty). Initially, 
we set 
$\sigma_0$ to be a fixed string and initialize all messages to NULL. At round 
$i$ the node $u$ updates its state to $\sigma_{i+1}$ according to its previous 
state $\sigma_i$ and the messages that it got in the previous round. It then prepares $k$ 
messages to send $m_{u \to v_1}, \ldots, m_{u \to v_{\Delta}}$. To ease 
notation (and without loss of generality), we assume that each state contains 
the ID of the node $u$. Thus, we can focus on a single update 
function $f_i$ for every round that works for all nodes. The function $f_i$ 
gets the state $\sigma_i$, the messages $m_{v_1 \to u}\ldots, 
m_{v_\Delta \to u}$, and the randomness $s$. The output of $f_i$ is the next 
state $\sigma_{i+1}$, and at most $k$ outgoing 
messages:
\begin{align*}
	(\sigma_{i},m_{u \to v_1}, \ldots, m_{u \to v_\Delta}) \gets 
	f_i(\sigma_{i-1},m_{v_1 \to u},\ldots, m_{v_\Delta \to u}, s).
\end{align*}

Our compiler works \emph{round-by-round} where each round $i$ is replaced by a 
collection of rounds that ``securely'' compute $f_i$, in a manner that will be 
explained next.
The complexity of our algorithm depends exponentially on the space 
complexity of the functions $f_i$. Thus, we proceed by transforming the 
original algorithm $\cA$ to one in which each $f_i$ can be computed in 
logarithmic space, while slightly increasing the number of rounds.

\begin{claim}\label{clm:log-space}
Any natural distributed algorithm $\cA$ the runs in $r$ rounds can be 
transformed 
into a new algorithm $\widehat{\cA}$ with the same output distribution such 
that 
$\widehat{\A}$ is computable in logarithmic space using $r'=r \cdot 
\poly(\Delta + \log n)$ rounds.
\end{claim}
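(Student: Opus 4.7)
} The plan is to split each round $i$ of $\cA$ into many sub-rounds, each performing a tiny piece of the computation of $f_i$: concretely, evaluating a single gate in a Boolean circuit that realizes $f_i$. Since $\cA$ is natural, the input to each $f_i$ (state, incoming messages, and randomness) has size $m = O(\Delta \log n)$ and $f_i$ runs in time $\poly(m) = \poly(\Delta + \log n)$. By standard results (e.g., the Cook--Levin transformation applied to the oblivious TM computing $f_i$), $f_i$ is computable by a log-space-uniform Boolean circuit $C_i$ of fan-in $2$ with $s = \poly(\Delta + \log n)$ gates. The new algorithm $\widehat{\cA}$ replaces each round $i$ with $s$ consecutive sub-rounds, the $j$-th of which evaluates gate $j$ of $C_i$.

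The (enlarged) state of $\widehat{\cA}$ during the simulation of round $i$ will consist of: (i) the old state $\sigma_i$, the messages received at the start of round $i$, and the freshly drawn randomness $s_i$; together with (ii) the bit values of those wires of $C_i$ that have already been evaluated. Its total size is $\poly(\Delta + \log n)$. In sub-round $j$, the round function $\widehat{f}_{i,j}$ reads this state, uses the log-space-uniform description of $C_i$ to look up the two input wires of gate $j$, retrieves their bits from the wire portion of the state, evaluates the constant-size gate, and appends the resulting bit. In sub-round $j = s$, it additionally reads off the wires designated as outputs of $C_i$ to produce $(\sigma_{i+1}, m_{u\to v_1}, \ldots, m_{u \to v_\Delta})$; the neighbors receive the outgoing messages in this final sub-round, while all earlier sub-rounds send fixed placeholder messages that are discarded on arrival.

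Each $\widehat{f}_{i,j}$ uses only $O(\log s) = O(\log \Delta + \log \log n) = O(\log n)$ space, since it merely indexes into its input to locate a few specified bits, evaluates a single two-input Boolean gate, and writes a one-bit update (the final sub-round just streams precomputed output wires, which is also log-space). Correctness is immediate: after the $s$ sub-rounds, the wire values of $C_i$ exactly encode $f_i(\sigma_i, m_{v_1 \to u}, \ldots, m_{v_\Delta \to u}, s_i)$, so the joint distribution of outputs of $\widehat{\cA}$ is identical to that of $\cA$ (the random coins $s_i$ are simply fed to $C_i$ as ordinary input wires). The total round count is $r' = r \cdot s = r \cdot \poly(\Delta + \log n)$, as claimed.

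The one mildly delicate point, and the main obstacle in a fully detailed write-up, is ensuring the log-space uniformity of the circuit family $\{C_i\}$: one needs $C_i$ to be describable so that given a gate index $j$ in binary one can compute the indices of its two input wires and its gate type using only $O(\log s)$ workspace. This is standard (use an oblivious TM simulation of $f_i$ of length $\poly(m)$ and convert tape configurations to circuit layers in the usual way), and carries no distributed-algorithmic subtlety; every other step is routine bookkeeping.
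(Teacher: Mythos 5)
Your proposal is essentially the same as the paper's proof: both convert $f_i$ (which runs in $\poly(\Delta,\log n)$ time because $\cA$ is natural) into a Boolean circuit of size $t=\poly(\Delta,\log n)$, then replace round $i$ by $t$ sub-rounds each of which evaluates a single gate, storing the already-computed gate values as part of an enlarged state so that each per-sub-round update function touches only $O(\log \Delta + \log\log n)$ bits of workspace. The only real difference is presentational: you are more explicit about log-space uniformity of the circuit family and about sending placeholder messages in the non-final sub-rounds, whereas the paper simply says those sub-rounds have no communication; neither point changes the substance of the argument.
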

\begin{proof}
Let $t$ be the running time of the function 
$f_i$. Then, $f_i$ can be computed with a circuit of at most $t$ gates. Note that since $\cA$ is natural, it holds that $t\leq \poly(\Delta, \log n)$. 

Instead of letting $u$ computing $f_i$ in round $i$, we replace the $i$th round by $t$ rounds 
where each round computes only a single gate of the function $f_i$. These new rounds 
will have no communication at all, but are used merely for computing $f_i$ with 
a \emph{small} amount of memory.

Let $g_1,\ldots,g_t$ 
be the gates of the function $f_i$ in a computable order where $g_t$ is the 
output of the function. We define a new 
state $\sigma'_i$ of the form $\sigma' = 
(\sigma_i,g_1,\ldots,g_t)$, where $\sigma_i$ is the original state, and $g_j$ 
is the value of the $j$th gate. Initially, $g_1,\ldots,g_t$ are set to $\bot$. 
Then, for all $j \in [t]$ we define the function
$$f^j_i(\sigma_i,g_1,\ldots,g_{j-1},\bot,\ldots,\bot) = 
(\sigma_i,g_1,\ldots,g_{j-1},g_j,\bot,\ldots,\bot).
$$
In the $j$th round, we compute $f^j_i$, until the final $g_t$ is computed. Note 
that $f^j_i$ can be computed with logarithmic space, and since $t \le 
\poly(\Delta, \log n)$ we can compute $f^j_i$ with space $O(\log \Delta + \log 
\log n)$.
As a result, the $r$-round algorithm $\cA$ is replaced by an $rt$-round algorithm $\widehat{\cA}$,
where $t \le \poly(\Delta, \log n)$. That 
is, we have that $r' \le \poly(\Delta, \log n)$.
\end{proof}

As we will see, our compiler will have an overhead of $\poly(\Delta, \log n)$ 
in the round complexity and hence the overhead of \Cref{clm:log-space} is 
insignificant. Thus, we will assume that the distributed algorithm $\cA$ 
satisfies that all its functions $f_i$ are computable in logarithmic space 
(\ie we assume that the algorithm is already after the above transformation). 

\subsection{Secure Simulation of a Single Round}
In the algorithm $\cA$ each node $u$ computes the function $f_i$ in each round 
$i$. In our secure algorithm $\cA'$ we want to simulate this computation, however, 
on \emph{encrypted} data, such that $u$ does not get to learn the true output of $f_i$ 
in any of the rounds except for the last one. When we say ``encrypted'' data, we 
mean a ``one-time-pad'' (see \Cref{def:one-time-pad}). That is, we merely refer 
to a process where we the data is masked by XORing it with a random string $R$. 
Then, $R$ is called the encryption (and also decryption) key. Using this 
notion, we define a related function 
$f'_i$ that, intuitively, simulates $f_i$ on encrypted data, by getting 
encrypted state and messages as input, decrypting them, then computing $f_i$ 
and finally encrypting the output with a new key. We simulate every round of 
the 
original algorithm $\A$ by a $\PSM$ protocol for the function $f'_i$.

\paragraph{The Secure Function $f'_i$.}
The function $f'_i$ gets the following inputs (encrypted elements will 
be denoted by the $\widehat{\cdot}$ notation):
\begin{enumerate}
	\item An encrypted state $\widehat{\sigma}_{i-1}$ and encrypted messages 
	$\curparenn{\widehat{m}_{v_j \to u}}_{j=1}^{\Delta}$.
	\item The decryption key $R_{\sigma_{i-1}}$ of the state $\widehat{\sigma}_{i-1}$
	 and the decryption keys $\curparenn{R_{v_j \to u}}_{j}^{\Delta}$
	for the messages $\curparenn{\widehat{m}_{v_j \to u}}_{j=1}^{\Delta}$.	
	\item Shares for randomness $\curparenn{R^j_s}_{j=1}^{\Delta}$ for the 
	function $f_i$.
	\item Encryption keys for encrypting the new state
	$R_{\sigma_{i}}$ and messages
		$\curparenn{R_{u \to v_j}}_{j=1}^{\Delta}$.
\end{enumerate}
The function $f'_i$ 
decrypts the state and messages and 
runs the function $f_i$ (using randomness $s=\bigoplus R^j_s$) to get the new 
state 
$\sigma_{i}$ and the outgoing messages $m_{u \to v_1}, \ldots, m_{u \to v_\Delta}$. 
Then, it encrypts the new state and messages using the encryption keys.
In total, the function $f'_i$ has $O(\Delta)$ input bits.
The precise description of $f'_i$ is given in \Cref{fig:func-f'}.
\begin{figure}[!h]
\begin{boxedminipage}{\textwidth}
\vspace{1mm} \textbf{The description of the function $f'_i$.}
\\ \textbf{Input:}
An encrypted state $\widehat{\sigma}_{i-1}$,
encrypted messages $\curparen{\widehat{m}_{v_j \to u}}_{j=1}^{\Delta}$, 
keys for decrypting the input 
$R_{\sigma_{i-1}}, \curparen{R_{v_j \to u}}_{j}^{\Delta}$,
randomness $\curparen{R^j_s}_{j=1}^{\Delta}$
and keys for encrypting the output
$R_{\sigma_{i}}, \curparen{R_{u \to v_j}}_{j=1}^{\Delta}$.

\vspace{1mm} \textbf{Run:}
\begin{enumerate}
	\item Compute $\sigma_{i-1} \gets 
	\widehat{\sigma}_{i-1} \oplus R_{\sigma_{i-1}}$ and $s 
	\gets \paren{\bigoplus_{j=1}^{\Delta} R^j_s}$.
	\item For $j=1 \ldots \Delta$: compute $m_{v_j \to u} \gets 
	\widehat{m}_{v_j \to u} \oplus R_{v_j \to 
	u}$.
	\item Run $	\sigma_{i},m_{u \to v_1}, \ldots, m_{u \to v_\Delta} \gets 
		f(\sigma_{i-1},m_{v_1 \to u},\ldots, m_{v_\Delta \to u}, s)$.
	\item Compute $\widehat{\sigma_{i}} \gets \sigma_{i} \oplus 
	R_{\sigma_{i}}$.
	\item For $j=1 \ldots \Delta$: compute $\widehat{m}_{u \to v_j} \gets 
	m_{u \to v_j} \oplus R_{u \to v_j}$.
	\item Output $\widehat{\sigma_{i}},\widehat{m}_{u \to 
	v_1},\ldots,\widehat{m}_{u \to v_\Delta}$.
\end{enumerate}
\end{boxedminipage}

\caption{The function $f'_i$.}
\label{fig:func-f'}
\end{figure}

Recall that in the $\PSM$ model, we have $k$ parties $p_1,\ldots, p_k$ and a 
server $s$, 
where it was assumed that all parties have private shared randomness (not 
known to $s$). Our goal is to compute $f'_i$ securely by implementing a $\PSM$ 
protocol for all nodes in the graph simultaneously.

The compiler securely 
computes $f'_i$ by simulating the $\PSM$ 
protocol for $f'_i$, treating $u$ as the \emph{server} and its immediate 
neighborhood as the \emph{parties}. In order to exchange the private 
randomness, we use the notion of private neighborhood trees.

A private neighborhood tree collection consists of $n$ trees, a tree 
$T_u$ for every $u$, 
that spans all the neighbors of $u$  (\ie the parties) without going through $u$, i.e.,
$T_u \subseteq G\setminus \{u\}$. Using this 
tree, all the parties can compute shared private random bits $R$ which  are not known to $u$. 
For a single node $u$, this can be done in $O(\Diam(T_u)+|R|)$ rounds, where $\Diam(T_u)$ is the 
diameter of the tree and $R$ is the number of random bits. Clearly, our objective is to have trees $T_u$ with small diameter.  Furthermore, as we wish to implement this kind of communication in all $n$ trees, $T_{u_1},\ldots, T_{u_n}$ simultaneously, a second objective is to have small overlap between the trees. That is, we would like each edge $e$ to appear only on a small number of trees $T_u$ (as on each of these trees, the edge is required to pass through different random bits). These two objectives are encapsulated in our notion of 
\emph{private-neighborhood-trees}. 
The final algorithm $\A'_i(u)$ 
for securely computing $f'_i$ is described in \Cref{fig:alg-psm}.

\begin{figure}[!h]
\begin{boxedminipage}{\textwidth}
\vspace{1mm} \textbf{The algorithm $\A'_i(u)$ for securely computing $f'_i$.}

\textbf{Input:} Each node $v \in \Gamma(u)$ has input $x_v \in \bit^{m}$.
\begin{enumerate}
	\item Let $T_u$ be the tree spanning $\Gamma(u)$ in $G \setminus \{u\}$ and 
	let $w$ be the root.
	\item $w$ chooses a random string $R$ and sends it to $\Gamma(u)$ 
	using the tree $T_u$.
	\item Each node $v \in \Gamma(u)$ computes $M_v = 
	\PSMEnc(f'_i,x_v,R)$ and sends it to $u$.
	\item $u$ computes $y=\PSMDec \left(f'_i,\{M_v\}_{v \in \Gamma(u)} 
	\right)$.
\end{enumerate}
\end{boxedminipage}
\caption{The description of the distributed $\PSM$ algorithm of node $u$ for 
securely computing the function $f'_i$.}
\label{fig:alg-psm}
\end{figure}

In what follows analyze the security and round complexity of Algorithm $\A'_i$.

\paragraph{Round Complexity.}
Let $f:\{0,1\}^{m \cdot|\Gamma(u)|} \to \{0,1\}^\ell$ be a function with 
$|\Gamma(u)| \le \Delta$ inputs, where each input is of length $m$ 
bits. The communication complexity 
of the $\PSM$ protocol depends on the input and output length of the function 
and also on the memory required to compute $f$. Suppose that $f$ is computable 
by an $s$-space Turing Machine (TM). Then, by \Cref{thm:psm} the communication 
complexity (and randomness complexity) of the protocol is at most 
$O(\Delta m\ell \cdot 2^{2s})$.

In the first phase of the protocol, the root $w$ sends a collection of random 
bits $R$ to $\Gamma(u)$ 
using the private neighborhood trees, where $|R|=O(\Delta \cdot m\cdot\ell 
\cdot 
2^{2s})$. By 
\Cref{thm:neighborhood-cover}, the diameter 
of the tree is at most $\widetilde{O}(D\Delta)$ and each edge belongs to 
$\widetilde{O}(D)$ different trees. Therefore, there are total of 
$\widetilde{O}(D \cdot |R|)$ many bits that need to go through a single 
edge when sending the information on all trees simultaneously.
Using the random delay approach of
\Cref{thm:delay}, this can be done in 
$\widetilde{O}(D\Delta+D \cdot|R|)=\widetilde{O}(\Delta \cdot D\cdot m 
\cdot\ell \cdot 
2^{2s})$ rounds.
This is summarized by the following Lemma:

\begin{lemma}\label{lemma:distributed-psm}
	Let $f:(\bit^m)^{\Delta} \to \bit^{\ell}$ be a function over $\Delta$ 
	inputs where each is of length at most $m$ and that is computable by a 
	$s$-space Turing Machine. Then, there is a distributed algorithm $\A'_i(u)$ (in the 
	$\mathsf{CONGEST}$ model) with perfect 
	privacy where each node $u$ outputs $f$ evaluated on $\Gamma(u)$. The 
	round complexity of $\A'_i(u)$ is $\widetilde{O}(\Delta \cdot D\cdot m 
\cdot\ell \cdot 
2^{2s})$.
\end{lemma}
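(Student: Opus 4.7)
The algorithm is already exhibited in Figure~\ref{fig:alg-psm}: I would instantiate a $\PSM$ protocol $(\PSMEnc,\PSMDec)$ for $f$ via Theorem~\ref{thm:psm}, and use the private neighborhood tree $T(u)\subseteq G\setminus\{u\}$ from Theorem~\ref{thm:neighborhood-cover} as the infrastructure through which $\Gamma(u)$ establishes the shared randomness $R$ unseen by $u$. The plan is to run $\A'_i(u)$ simultaneously for every node $u\in V$ and then to verify three things: correctness, perfect privacy of the view of each $u$, and the stated round bound.

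Correctness is immediate: once $\Gamma(u)$ shares $R$ via $T(u)$, each $v\in\Gamma(u)$ sends $M_v=\PSMEnc(f,x_v,R)$ to $u$, and $u$ recovers $f(x_{v_1},\ldots,x_{v_\Delta})$ by applying $\PSMDec$ to $\{M_v\}_{v\in\Gamma(u)}$ by the correctness guarantee of Theorem~\ref{thm:psm}. For perfect privacy I would exhibit a simulator $\Sim$ that, given $G$ and the output $y=f(x_{v_1},\ldots,x_{v_\Delta})$, reproduces $u$'s view. That view decomposes into (i) random bits that $u$ merely forwards as an internal node of trees $T(u')$ for other $u'\ne u$, and (ii) the $\Delta$ messages $\{M_v\}_{v\in\Gamma(u)}$ received in Step~3. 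Bits of type (i) can be drawn afresh by $\Sim$ because they are independent of any data associated with $u$ — they are transport for the secret $R_{u'}$ of a different tree to which $u$ contributes no input. Bits of type (ii) are handled by invoking the PSM simulator of Theorem~\ref{thm:psm} on input $y$; the crucial structural point is that $u\notin T(u)$, so $u$ never observes $R$, which is precisely the assumption under which the PSM simulator's output is distributed identically to $\{M_v\}_{v\in\Gamma(u)}$. Combining the two families of messages yields $\Sim(G,y)\equiv \View_u^{\A'_i}$.

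For the round complexity I would chain the following bounds. By Theorem~\ref{thm:psm}, $|R|$ and each $|M_v|$ are $O(\Delta\cdot m\cdot\ell\cdot 2^{2s})$. Step~2 is a pipelined top-down broadcast of $|R|$ bits on a tree of depth $\dilation=\widetilde{O}(D\Delta)$, costing $O(\dilation+|R|)$ rounds in isolation. Running the $n$ broadcasts simultaneously, each edge lies in at most $\congestion=\widetilde{O}(D)$ trees (Theorem~\ref{thm:neighborhood-cover}), so the total traffic per edge is $\widetilde{O}(D\cdot|R|)$; Theorem~\ref{thm:delay} then produces a schedule of length $\widetilde{O}(\dilation+D\cdot|R|)=\widetilde{O}(\Delta\cdot D\cdot m\cdot\ell\cdot 2^{2s})$. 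Step~3 sends one $\PSM$ message of length $|M_v|$ across a single edge per neighbor and is absorbed into the same bound; Step~4 is local. Adding gives the claimed round complexity.

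The main obstacle I anticipate is the privacy reduction, specifically arguing that the simultaneous simulations across all $n$ trees compose without leaking. The defining property of private neighborhood trees, namely $u\notin T(u)$, is what makes this go through: $u$'s role in the global execution cleanly splits into PSM messages it reads as a server (covered by the PSM simulator) plus purely random forwarding traffic it performs for other nodes (simulated by fresh independent randomness), with no correlation between the two from $u$'s viewpoint.
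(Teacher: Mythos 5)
Your proposal is correct and follows essentially the same route the paper takes: the protocol is exactly Figure~\ref{fig:alg-psm}, and your round-complexity chain (PSM message and randomness size $O(\Delta m\ell 2^{2s})$ via Theorem~\ref{thm:psm}, tree depth $\widetilde{O}(D\Delta)$ and congestion $\widetilde{O}(D)$ via Theorem~\ref{thm:neighborhood-cover}, then the random-delay scheduling of Theorem~\ref{thm:delay}) matches the paper's computation term by term.

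Where you diverge slightly is the privacy argument, which the paper does not spell out at this lemma at all but defers to the later ``Security'' paragraph, where it additionally invokes the perfect-security universal-composability result of \cite{KushilevitzLR10} to justify running all the PSM instances concurrently; you instead argue independence directly, which is a perfectly valid alternative here because each tree $T(u')$ uses fresh, independently sampled randomness $R_{u'}$, so $u$'s joint view factors as a product distribution. One imprecision to fix in your case~(i): it is not true that $u$ ``contributes no input'' to the trees $T(u')$ for which it carries traffic — whenever $u\in\Gamma(u')$, $u$ is a leaf of $T(u')$ and a PSM client for $u'$, receiving a piece of $R_{u'}$ and using it together with $x_u$ to form $M_u$. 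This does not hurt the conclusion, but the right justification is not ``no input''; it is that (a) the bits $u$ receives or forwards for other trees are marginally uniform, (b) they are independent of $u$'s own protocol's randomness $R_u$ and hence of the messages $\{M_v\}_{v\in\Gamma(u)}$, and (c) the message $M_u$ that $u$ sends to $u'$ is a deterministic function of data $u$ already holds, so it adds nothing to $u$'s view. With that adjustment the simulator you describe is exactly the one the paper intends.
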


\subsection{The Final Secure Algorithm}
Using the function $f'_i$, we define the algorithm $\cA'_u$ for computing the 
next state and messages of the node $u$. We describe the algorithm for any $u$ in 
the graph, and at the end, we show that all the algorithms $\{\cA'_u\}_{u \in 
G}$ 
can be run simultaneously with low congestion.

The algorithm $\cA'_u$ involves running the distributed algorithm $\A'_i(u)$ 
for each round $i \in \{1,\ldots, r\}$. The secure simulation of round $i$ starts by letting the root of each tree $T_u$ (i.e., the tree connecting the neighbors of $u$ 
in $G \setminus \{u\}$) sample a key $R_{\sigma_{i}}$ for encrypting the new 
state of $u$. Moreover, each neighbor $v_j$ of $u$ samples 
a share of the randomness $R^j_s$ used to evaluate the function $f_i$, and a 
key $R_{u \to v_j}$ for encrypting the message sent from $u$ to $v_j$.

Then they run $\A'_i(u)$ algorithm with $u$ as the server and 
$\Gamma(u)$ as the parties for computing the function $f'_i$ (see 
\Cref{fig:alg-psm}). The node $u$ has the 
encrypted state and message, the neighbors of $u$ have the (encryption and 
decryption) keys for the current state, the next state and the sent messages, and 
moreover the randomness for evaluating $f'_i$. At the end of the 
protocol, $u$ computes the output 
of $f'_i$ which is the encrypted output of the function $f_i$.

After the final round, 
$u$ holds an encryption of the final state $\widehat{\sigma_r}$ which contains 
only the output of the original algorithm $\A$. At this point, the neighbors of 
$u$ send it 
the decryption key for this last state, $u$ decrypts its state and outputs the 
decrypted state. Initially, the state $\sigma_{0}$ is a fixed string which is 
not encrypted, and the encryption keys for this round are assumed to be 
$0$. The description is summarized in \Cref{fig:alg-A_u}. See \Cref{fig:secrets} for an illustration.
\begin{figure}[!h]
\begin{boxedminipage}{\textwidth}
\vspace{1mm} \textbf{The description of the algorithm $\cA'_u$.}
\begin{enumerate}
  \item Let $v_1,v_2, \ldots, v_\Delta$ be some arbitrary ordering on $\Gamma(u)$. 
	\item For each round $i=1\ldots r$ do:
	\begin{enumerate}
	\item $u$ sends $\widehat{\sigma}_{i-1}$ to neighbor $v_2$.
	\item Each neighbor $v_j$ of $u$ samples $R^j_s$ at random (and stores it).
	\item $v_1$ chooses $R_{\sigma_{i}}$ at random (and stores it).
	\item Run the $\A_i(u)$ algorithm for $f'_i$ with server $u$ and 
	parties $\Gamma(u)$ where:
	\begin{enumerate}
		\item $v_1$ has an inputs $R_{\sigma_{i-1}}$ and 
		$R_{\sigma_{i}}$ and $v_2$ has input $\widehat{\sigma}_{i-1}$.
		\item In addition, each neighbor $v_j$ of $u$ has input $R_{u \to 
		v_j},R^j_s$.
		\item $u$ learns the final output of the algorithm
		$(\widehat{\sigma_{i}},\widehat{m}_{u \to 
		v_1},\ldots,\widehat{m}_{u \to v_\Delta})$.	
	\end{enumerate}	
	\end{enumerate}
	\item $v_1$ sends $R_{\sigma_{r}}$ to $u$.
	\item $u$ computes $\sigma_r = \widehat{\sigma}_r \oplus 
	R_{\sigma_{r}}$ and outputs $\sigma_r$.
\end{enumerate}
\end{boxedminipage}

\caption{The description of the Algorithm $\cA'_u$. We 
assume that in ``round 0'' all keys are initialized to 0. That is, 
we let $R_{\sigma_{0}}=0$, and initially set 
$R_{v_j \to u}=0$ for all $j \in [\Delta]$.}
\label{fig:alg-A_u}
\end{figure}

\begin{figure}[h!]
	\begin{center}
		\includegraphics[scale=0.60]{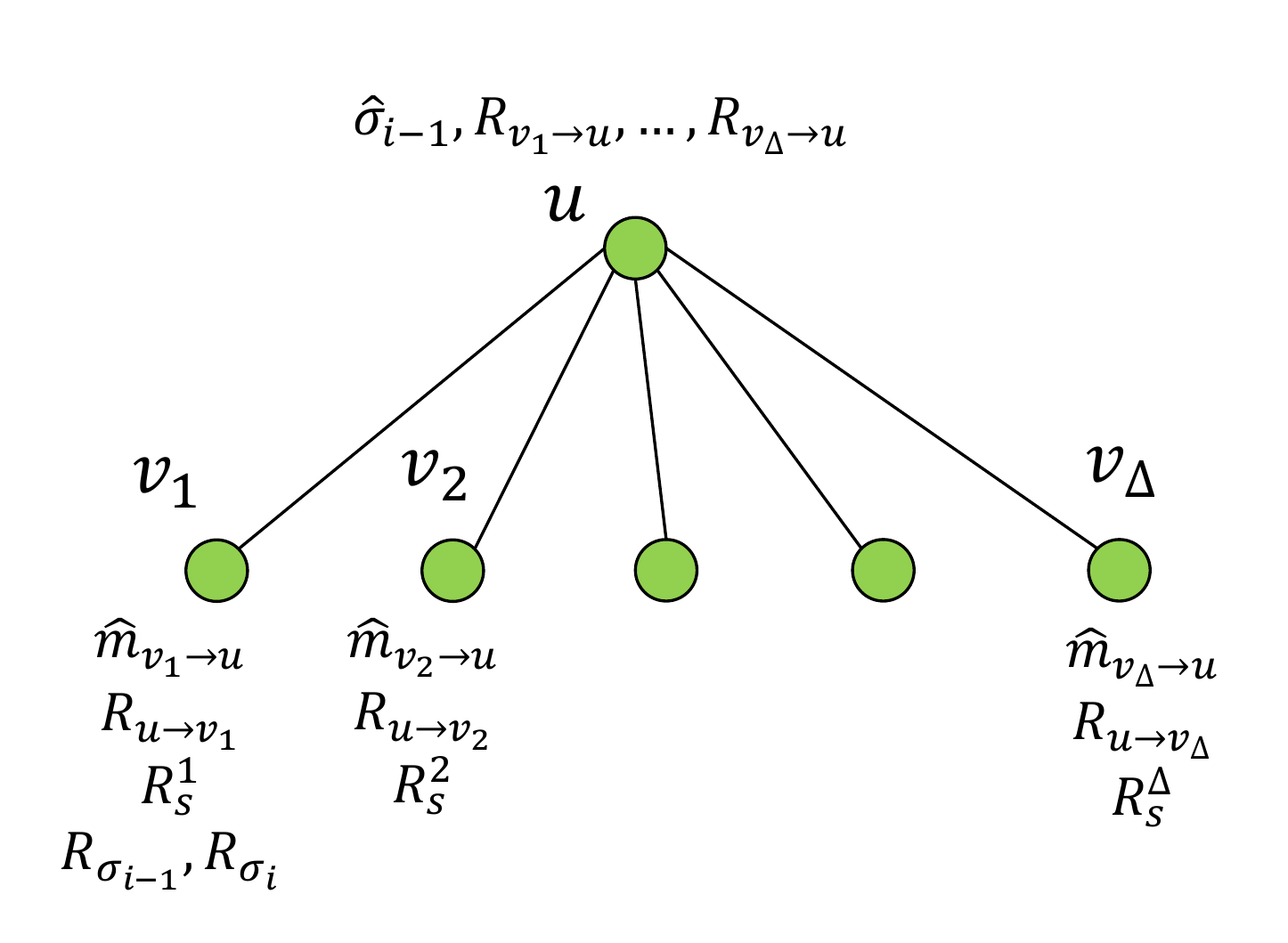}
		\caption{The information held by $u$ and its neighbors in phase $i$ of the algorithm.}
			\label{fig:secrets}
	\end{center}
\end{figure}

Finally, we show that the protocol is correct and secure.

\paragraph{Correctness.}
The correctness follows directly from the construction. Consider a node $u$ in 
the graph. Originally,  $u$ computes the sequence of states 
$\sigma_{0},\ldots,\sigma_r$ where $\sigma_r$ contained the final output of the 
algorithm. In the compiled algorithm $\A'$, for each round $i$ of $\cA$ and 
every node $u$ the 
sub-algorithm $\A'_i(u)$ computes $\widehat{\sigma_{i}}$, where 
$\widehat{\sigma_{i}} = \sigma_{i} 
\oplus R_{\sigma_{i}}$ where $v_1$ holds $R_{\sigma_{i}}$. Thus, after the last 
round, $u$ has 
$\widehat{\sigma_{r}}$ and $v_1$ has $R_{\sigma_{r}}$. Finally, $u$ computes 
$\widehat{\sigma}_r 
\oplus R_{\sigma_{r}}=\sigma_r$ and outputs 
$\sigma_r$ as required.

\paragraph{Round Complexity.}
We compute the number of rounds of the algorithm for any natural algorithm 
$\A$. 
The algorithm consists of $r'=r \cdot \poly(\Delta + \log n)$ iterations. In 
each iteration, every vertex $u$ implements algorithm 
$\A'_i$ for the function $f'_i$ (there are other operations in 
the iteration but they are negligible). We know that $f_i$ can be computed in 
$s$-space where $s=O(\log \Delta + \log\log n)$, and thus we can bound the size 
of each input 
to $f'_i$ by $\poly(\Delta) \cdot \polylog(n)$. Indeed, the state has this 
bound by the definition of a natural algorithm, and thus also the encrypted 
state (which has 
the exact same size), the messages and encryption keys for the messages have 
length at most $\log n$, and the randomness shares are of size at most the 
running time of $f_i$ which is at most $2^s$ where $s$ is the space of $f_i$  
and thus the bound holds. 
The output length shares the same bound as well.

Since $f_i$ can be computed in $s$-space where $s=O(\log \Delta + \log\log n)$, 
we observe that $f'_i$ can be computed in $s$-space as well. This includes 
running $f_i$ in a ``lazy'' manner. That is, whenever the TM for computing 
$f_i$ asks to read a the $\ith{i}$ bit of the input, we generate this bit 
by performing the appropriate XOR operations for the $\ith{i}$ bit of the input 
elements. The memory required for this is only storing indexes of the input 
which is $\log(\Delta \cdot \poly(\log n))$ bits and thus $s$ bits suffice.

Then, by 
\Cref{lemma:distributed-psm} we get that algorithm
$\A'_i(u)$ for $f'_i$ runs in $\widetilde{O}(D \cdot \poly(\Delta))$ rounds, 
and the total number of rounds of our algorithm is  
$\widetilde{O}(rD \cdot \poly(\Delta))$. In particular, if the degree $\Delta$ 
is 
bounded by $\polylog(n)$ then we get $\widetilde{O}(rD)$ number of rounds.
\begin{remark}[Round complexity for non-natural 
algorithms]\label{remark:general}
 If $\cA$ is not a ``natural'' algorithm then we can bound the number of rounds 
 with dependency on the time complexity of the algorithm. If each function 
 $f_i$ (the local computation of the nodes) 
 can be computed by a circuit of size $t$ then the number of rounds of the 
 compiled algorithm is bounded by $\widetilde{O}(rDt \cdot \poly(\Delta))$.
\end{remark}

\paragraph{Security.}
We begin by describing the security of a single sub-protocol $\cA'_u$ for any node 
$u$ in the graph. The algorithm $\cA'_u$ has many nodes involved, and we begin by 
showing how to simulate the messages of $u$. Fix an iteration $i$, and consider 
the all the messages sent to $u$ by the $\PSM$ protocol in $\cA'_i(u)$ denoted 
by 
$\{M_v\}_{v \in G}$, and let $\widehat{\sigma_{i}},\widehat{m}_{u \to 
v_1},\ldots,\widehat{m}_{u \to v_\Delta}$ be the output of the protocol.
By the security of the $\PSM$ protocol, there is a simulator 
$\Sim$ such that the following two distributions are equal:
$$ \curparen{M_v}_{v \in G} \equiv \Sim(\widehat{\sigma_{i}},\widehat{m}_{u \to 
v_1},\ldots,\widehat{m}_{u \to v_\Delta}).$$
Since $\widehat{\sigma_{i}}$ and $\widehat{m}_{u \to 
v_1},\ldots,\widehat{m}_{u \to v_\Delta}$ are encrypted by keys that are 
never sent to $u$ we have that from the viewpoint of $u$ the distribution of 
$\widehat{\sigma_{i}}$ and of $\widehat{m}_{u \to 
v_1},\ldots,\widehat{m}_{u \to v_\Delta}$ are uniformly random. Thus, we can 
run the simulator with a random string $R$ of the same length and have
$$\Sim(\widehat{\sigma_{i}},\widehat{m}_{u \to v_1},\ldots,\widehat{m}_{u \to 
v_\Delta}) \equiv \Sim(R).$$

While this concludes the simulator for $u$, we need to show a simulator for 
other nodes that participate in the protocol. Consider the neighbors of $u$. 
The neighbor $v_1$ has the encryption key for the state, and $v_2$ has the 
encrypted state. Since they never exchange this information, each of them gets 
a uniformly random string. In addition to their own input, the neighbors have 
the shared randomness for the $\PSM$ protocol. All these elements are uniform 
random 
strings which can be simulated by a simulator $\Sim$ by sampling a random 
string of the same length. 

To conclude, the privacy of $\cA'_i(u)$ follows from the perfect privacy of 
$\PSM$ 
protocol we use. The $\PSM$ security guarantees a perfect simulator for 
the server's viewpoint, and it is easy to construct a 
simulator for all other parties in the protocol as they only receive random 
messages. While the $\PSM$ was proven secure in a stand-alone setting, 
in our protocol we have a composition of many instances of the protocol. 
Fortunately, it was shown in \cite{KushilevitzLR10} that any protocol
that is perfectly secure and has a black-box non-rewinding simulator, is also 
secure under universal composability, that is, security is guaranteed to hold 
when many arbitrary 
protocols are performed concurrently with the secure protocol. We observe that 
the $\PSM$ has a simple simulator that is black-box and non-rewinding, and thus 
we can apply the result of \cite{KushilevitzLR10}. This is since the simulator 
of the $\PSM$ protocol is an algorithm that runs the protocol on an arbitrary 
message that agrees with the output of the function.

\appendix
\section{Distributed Construction of Private Neighborhood Trees}\label{sec:preproc}
The distribute output format of private neighborhood trees $\cN$ is that each 
node $u$ knows its parent in the spanning tree $T(v) \in \cN$ for every $v \in 
V$. For the purpose of our compiler, the private neighborhood trees should be 
computed once, in a preprocessing step. We now use the construction of cycle 
covers from \cite{ParterY18}, and show:
\begin{lemma}\label{lem:simulationdist_lcnc}
Given an $r$-round algorithm for constructing $(\dilation, \congestion)$ cycle cover $\mathcal{C}$, there exists an $r'$-round algorithm for construction a $(\dilation', \congestion')$ private neighborhood trees with
$\dilation'=\dilation \cdot \Delta$, $\congestion'=\congestion \cdot \dilation$ and $r'=r \cdot \widetilde{O}(\dilation, \congestion)$. 
\end{lemma}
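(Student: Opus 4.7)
The plan is to distribute the $\log \Delta$-phase construction from the proof of \Cref{thm:neighborhood-cover}. Each phase $t$ carries out three tasks: (i) assemble the auxiliary graph $\widetilde{G}_t$ from the current forest collection $\mathcal{N}_{t-1}$, (ii) compute a $(\dilation,\congestion)$-cycle cover on $\widetilde{G}_t$, and (iii) translate the resulting cycles back to $G$ and update each forest $F_{t-1}(u) \to F_t(u)$.

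First I would observe that $\widetilde{G}_t$ admits a natural simulation in $G$ with constant round overhead. Each real node $u$ locally represents all its virtual copies $\widetilde{u}_1,\ldots,\widetilde{u}_{CC_{t-1}(u)}$; virtual edges of the form $(u,\widetilde{u}_j)$ incur no communication; and each real virtual edge $(\widetilde{u}_j,\widetilde{v}_i)$ corresponds one-to-one to the real edge $(u,v)\in G$. For the mapping to be well defined, I would maintain as a phase invariant that every node $v$ knows, for each neighbor $u\in\Gamma(v)$, the index of the component of $F_{t-1}(u)$ containing $v$; this invariant is refreshed at the end of each phase together with $F_t(u)$ itself via a single round of communication between adjacent nodes. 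Under this simulation, the assumed $r$-round cycle cover algorithm produces $\widetilde{\cC}_t$ in $r$ rounds of $G$, and the translation to $G$-cycles is the purely local relabeling already described in the proof of \Cref{thm:neighborhood-cover}.

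The main obstacle is step (iii): for every cycle $C\in\widetilde{\cC}_t$ and every node $u$ with $\widetilde{u}_j\in C$, the edges of $C$ not incident to $u$ must be added to $F_t(u)$, and every node lying on $C$ must subsequently learn its parent and component index in $F_t(u)$. I would accomplish the first part by having each cycle perform two pipelined traversals around itself: the first collects the sequence of cycle nodes and edges, the second distributes this description to every node on the cycle. A single traversal takes $O(\dilation)$ rounds, the total load per $G$-edge across all cycles is $\widetilde{O}(\dilation\cdot\congestion)$ bits, and \Cref{thm:delay} schedules the propagation in $\widetilde{O}(\dilation\cdot\congestion)$ rounds. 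Each node $u$ then locally merges the affected components of $F_{t-1}(u)$, and a final BFS from an arbitrary neighbor of $u$ in every component of $F_t(u)$, scheduled by \Cref{thm:delay} across all $u$ simultaneously, produces the parent pointers and the refreshed component indices required for phase $t+1$. Summing over the $\log\Delta$ phases yields an overall round complexity of $r \cdot \widetilde{O}(\mathrm{poly}(\dilation,\congestion))$, matching the lemma statement up to polylogarithmic factors.
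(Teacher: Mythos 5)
Your proposal follows the same high-level blueprint as the paper's proof: simulate each auxiliary graph $\widetilde{G}_t$ locally in $G$ (real node $u$ plays all its virtual copies, virtual edges $(u,\widetilde{u}_j)$ cost nothing, and $(\widetilde{u}_j,\widetilde{v}_i)$ is charged to the real edge $(u,v)$), run the assumed cycle-cover algorithm on $\widetilde{G}_t$, propagate cycle descriptions along the cycles using the random-delay scheduler of \Cref{thm:delay}, and finally build the spanning trees. The paper's version is terser on one point where you are more explicit, and this is to your credit: it merely asserts that ``since a node $v$ knows the cycles of its edges, it knows which virtual edges it should add,'' whereas you correctly identify the required phase invariant — that every $v$ must know, for each $u \in \Gamma(v)$, the index of its component in $F_{t-1}(u)$ — and propose to re-establish it each phase via a BFS on the new forests. (The paper instead defers the spanning-tree construction entirely to the last phase; your per-phase BFS costs only an extra $\log\Delta$ factor, which is absorbed by the $\widetilde O$.)

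Two small points of imprecision. First, your remark that the invariant ``is refreshed \ldots via a single round of communication between adjacent nodes'' undersells it: the refresh requires the per-component BFS you describe two sentences later, after which $v$ relays its index to $u$ in one round — the single round is only the last step. Second, your final bound of $r \cdot \widetilde O(\mathrm{poly}(\dilation,\congestion))$ silently drops the $\Delta$ factor incurred by BFS in the subgraphs $G_t(u)$, whose diameter is $\widetilde O(\Delta\cdot\dilation)$ since they accumulate $\widetilde O(\Delta)$ cycles; the honest per-phase cost is $\widetilde O\bigl(r + \dilation\congestion + \Delta\dilation\bigr)$. The lemma statement in the paper is equally loose in this respect (compare \Cref{cor:distprivate}, which does record the $\Delta\cdot\dilation$ term), so this is an inherited ambiguity rather than a flaw peculiar to your argument.
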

\begin{proof}
Let $\cA$ be an $r$-round algorithm for computing a $(\dilation, \congestion)$ cycle cover $\cC$.
Using the random delay approach \Cref{thm:delay}, we can make each edge $(u,v)$ know the edges of all the cycles it belongs to in $\cC$ with $\widetilde{O}(\dilation+\congestion)$ rounds. We then mimic the centralized reduction to cycle cover. In this reduction, we have $O(\log \Delta)$ applications of Algorithm $\cA$ on some virtual graph. Since a node $v$ knows the cycles of its edges, it knows which virtual edges it should add in phase $i$. Simulating the virtual graph can be done with no extra congestion in $G$. In each phase $i$, we compute a cycle cover in the virtual graph and then translate it into a cycle cover $\cC_i$  in the graph $G$. By the same argument as in \Cref{cl:pneighbor-congestion}, translating these cycles to cycles in $G$ does not increase the congestion. Using $\widetilde{O}(\dilation+\congestion)$ rounds, each edge $e$ can learn all the edges on the cycles that pass through it appears in $\cC_i$. 
At the last phase $\ell=O(\log \Delta)$, the graph $G_\ell(u_j)$ consists of $O(\log \Delta \cdot \Delta)$ cycles. In particular,
$$G_\ell(u_j)=\bigcup_{i=1}^\ell \{ C \in \cC_i ~\mid~ (u_j,v) \in C, v \in \Gamma(u_j)\}.$$
By the same argument of \Cref{cl:pneighbor-congestion}, each edge $e$ appears on $O(\log \Delta \cdot \congestion\cdot \dilation)$ different subgraphs $G_\ell(u_j)$ for $u_j \in V$. The diameter of each subgraph $G_\ell(u_j)$ can be clearly bounded by the number of nodes it contained which is $O(\log \Delta \cdot \Delta \cdot \dilation)$. Since each edge $e$ knows all cycles it appears on\footnote{We say that an edge $(u,v)$ knows a piece of information, if at least one of the edge endpoints know that.}, it also knows all the graphs $G_\ell(u_j)$ to which it belongs. Computing a spanning tree in $G_{\ell}(u_i)\setminus \{u_i\}$ can be done in $\widetilde{O}(\Delta \cdot \dilation)$ rounds. Using random delay again, and using the fact that each edge appears on $\widetilde{O}(\dilation)$ trees, all the spanning trees in $G_\ell(u_j)\setminus \{u_j\}$ can be constructed simultaneously in $\widetilde{O}(\Delta \cdot \dilation)$ rounds.
\end{proof}
Using the $\widetilde{O}(n)$-round construction of $(\dilation,\congestion)$ cycle covers with $\dilation=\widetilde{O}(D)$ and $\congestion=\widetilde{O}(1)$ from \cite{ParterY18}, yields the following:
\begin{corollary}\label{cor:distprivate}
For every $n$-vertex graph $G=(V,E)$ with diameter $D$ and maximum degree $\Delta$, one can construct in $\widetilde{O}(n+\Delta \cdot \dilation)$ rounds a $(\dilation,\congestion)$ private trees with $\dilation=\widetilde{O}(D \cdot \Delta)$ and $\congestion=\widetilde{O}(D)$.
\end{corollary}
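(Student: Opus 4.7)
The plan is to derive this corollary as an immediate consequence of Lemma~\ref{lem:simulationdist_lcnc} by plugging in the parameters of the distributed cycle cover construction of \cite{ParterY18}. Concretely, I would first invoke that construction, which supplies an $r$-round distributed algorithm $\cA$ producing a $(\dilation,\congestion)$-cycle cover with $r=\widetilde{O}(n)$, $\dilation=\widetilde{O}(D)$, and $\congestion=\widetilde{O}(1)$. Feeding $\cA$ into Lemma~\ref{lem:simulationdist_lcnc} directly yields private neighborhood trees with the parameters stated in the corollary: $\dilation'=\dilation\cdot\Delta=\widetilde{O}(D\cdot\Delta)$ and $\congestion'=\congestion\cdot\dilation=\widetilde{O}(D)$.

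For the round-complexity bound, I would reopen the proof of Lemma~\ref{lem:simulationdist_lcnc} and account for its two dominant contributions separately, since the na\"ive reading $r'=r\cdot\widetilde{O}(\dilation+\congestion)$ would give a weaker $\widetilde{O}(nD)$ bound. The first contribution comes from the $O(\log\Delta)$ phases: each phase runs $\cA$ on a virtual graph of the same asymptotic size and diameter as $G$, uses the random-delay scheduler of Theorem~\ref{thm:delay} with overhead $\widetilde{O}(\dilation+\congestion)=\widetilde{O}(D)$ so that each edge learns the cycles through it, and then translates virtual cycles back to $G$-cycles without increasing congestion. Summed over $O(\log\Delta)$ phases, this is $\widetilde{O}(r)+\widetilde{O}(D)=\widetilde{O}(n)$.

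The second contribution is the final spanning-tree extraction inside the accumulated subgraphs $G_{\ell}(u_j)\setminus\{u_j\}$. Each such subgraph contains $O(\Delta\cdot\dilation)$ vertices coming from $O(\log\Delta\cdot\Delta)$ glued cycles, so a single BFS tree inside it costs $\widetilde{O}(\Delta\cdot\dilation)$ rounds. Since by Claim~\ref{cl:pneighbor-congestion} each edge belongs to only $\widetilde{O}(\dilation)$ of these subgraphs, the random-delay scheduler runs all $n$ spanning-tree computations simultaneously in $\widetilde{O}(\Delta\cdot\dilation)=\widetilde{O}(\Delta\cdot D)$ rounds. Adding the two contributions yields $\widetilde{O}(n+\Delta\cdot D)=\widetilde{O}(n+\Delta\cdot\dilation)$, as claimed.

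The main obstacle, and really the only genuine content beyond quoting the two ingredients, is making sure the polylog factors compose cleanly: the $\log\Delta$ from the number of phases, the polylog-in-$n$ factors hidden inside $\dilation$ and $\congestion$ of the underlying cycle cover, and the $\log n$ overhead of Theorem~\ref{thm:delay}. All of these are $\polylog(n)$, so they are absorbed into $\widetilde{O}(\cdot)$, and no further graph-theoretic argument is needed.
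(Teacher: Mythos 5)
Your proposal is correct and follows essentially the same route as the paper, which proves the corollary in a single sentence by instantiating Lemma~\ref{lem:simulationdist_lcnc} with the $\widetilde{O}(n)$-round $(\widetilde{O}(D),\widetilde{O}(1))$-cycle cover of \cite{ParterY18}. The genuinely useful thing you add is the round-complexity accounting: you correctly observe that the bound as literally stated in Lemma~\ref{lem:simulationdist_lcnc}, $r'=r\cdot\widetilde{O}(\dilation,\congestion)$, would only give $\widetilde{O}(nD)$ under a multiplicative reading, and that to obtain the claimed $\widetilde{O}(n+\Delta\cdot\dilation)$ one has to reopen the lemma's proof and see that its two dominant contributions are \emph{additive}: $O(\log\Delta)$ repetitions of the cycle-cover algorithm plus random-delay learning steps (total $\widetilde{O}(n)$), and a final parallel BFS in the glued subgraphs $G_\ell(u_j)\setminus\{u_j\}$ scheduled by random delay (total $\widetilde{O}(\Delta\cdot\dilation_{\mathrm{cc}})=\widetilde{O}(\Delta D)$, using that each edge lies in $\widetilde{O}(\congestion_{\mathrm{cc}}\cdot\dilation_{\mathrm{cc}})=\widetilde{O}(D)$ such subgraphs). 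This re-derivation is implicit but not spelled out in the paper's one-line proof of the corollary, so your version is more self-contained.

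One small caveat worth flagging: you describe the auxiliary graph in each phase as having ``the same asymptotic size'' as $G$, but the paper's own construction of $\widetilde{G}_t$ states it has $O(m)$ nodes (one virtual copy per connected component per vertex), which can be $\Theta(n\Delta)$. If the round complexity of the cycle-cover algorithm scales with the number of nodes of the graph it is run on, this step would actually cost $\widetilde{O}(m)$ rather than $\widetilde{O}(n)$. The paper's proof of Lemma~\ref{lem:simulationdist_lcnc} is equally silent on this point (it only argues the \emph{simulation} of $\widetilde{G}_t$ inside $G$ adds no congestion), so this is not a gap specific to your argument, but it is a place where you should not overstate that the virtual graph is the ``same size.''
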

\section*{Acknowledgments}
We thank Benny Applebaum, Uri Feige, Moni Naor and David Peleg for fruitful 
discussions 
concerning the nature of distributed algorithms and secure 
protocols.

\bibliographystyle{alpha}
\bibliography{crypto}
%\appendix
%\input{introapp}
%\input{tree-partitioning}
%\input{privacy-appendix}
%\input{distappend}
\end{document}